\newtheorem{theorem}{Theorem}[section]
\newtheorem{proposition}[theorem]{Proposition}
\newtheorem{lemma}[theorem]{Lemma}
\theoremstyle{definition}
\newtheorem{definition}[theorem]{Definition}
\title{\bf Fairly Allocating Many Goods with Few Queries}
\author[1]{Hoon Oh} 
\author[2]{Ariel D. Procaccia}
\author[3]{Warut Suksompong}
\affil[1]{Computer Science Department, Carnegie Mellon University, USA}
\affil[2]{School of Engineering and Applied Sciences, Harvard University, USA}
\affil[3]{School of Computing, National University of Singapore, Singapore}
\date{}
\begin{document}

\maketitle

\begin{abstract}
We investigate the query complexity of the fair allocation of indivisible goods. For two agents with arbitrary monotonic utilities, we design an algorithm that computes an allocation satisfying envy-freeness up to one good (EF1), a relaxation of envy-freeness, using a logarithmic number of queries. We show that the logarithmic query complexity bound also holds for three agents with additive utilities, and that a polylogarithmic bound holds for three agents with monotonic utilities. These results suggest that it is possible to fairly allocate goods in practice even when the number of goods is extremely large. By contrast, we prove that computing an allocation satisfying envy-freeness and another of its relaxations, envy-freeness up to any good (EFX), requires a linear number of queries even when there are only two agents with identical additive utilities.
\end{abstract}

\section{Introduction}

Fair division is the study of how to allocate resources among interested agents in such a way that all agents find the resulting allocation to be fair~\citep{BT96,Moul03}. One of the field's paradigmatic applications is the allocation of \emph{indivisible} goods; this task typically arises in inheritance cases, when, say, an art or jewelry collection is divided between several heirs. Indeed, dividing goods is one of five applications offered by \emph{Spliddit}~\citep{GP14}, a not-for-profit fair division website; since its launch in November 2014, the website has served more than 130,000 users, and, in particular, has solved thousands of goods-division instances submitted by users.

While \citet{Stein48} was the first to study fairness from a mathematical point of view, the history of fair division actually goes back much further: A simple fair division mechanism called the \emph{cut-and-choose protocol} is mentioned in the Book of Genesis. After a dispute between Abraham and Lot, Abraham suggests that the two go their separate ways. He divides the land into two parts that---here we are perhaps using artistic license---he likes equally, and lets Lot choose the part that he prefers. The cut-and-choose protocol ensures that the resulting allocation satisfy an important fairness property called \emph{envy-freeness}---each of Abraham and Lot finds his part to be worth at least as much as the other person's part. Even though envy-freeness can always be satisfied when the allocated resources are divisible \citep{Strom80}, this is not the case when we deal with indivisible resources. With two agents and a single indivisible good, we already see that one of the agents will not receive the good and will therefore envy the other agent. 

Consequently, various relaxations of envy-freeness have been considered, the most prominent one being \emph{envy-freeness up to one good (EF1)}. This means that some agent may envy another agent under the given allocation, but that envy can be eliminated by removing a single good from the latter agent's bundle. \citet{LMMS04} showed that EF1 can be guaranteed even when the agents have arbitrary monotonic utilities. They achieved this by using an algorithm that we will refer to as the \emph{envy cycle elimination algorithm}, which runs in time $O(n^3m)$, where $n$ is the number of agents and $m$ the number of goods. 

\renewcommand{\arraystretch}{1.5}
\begin{table*}[t]
\begin{center}
    \begin{tabular}{ | c | c | c | }
    \hline
     Notion & Monotonic utilities & Additive utilities \\ \hline \hline
		EF & $\geq\binom{m}{m/2}$ (Prop. \ref{prop:EFmonotonic}) & $\Theta(m)$ (Th. \ref{thm:EFadditive})   \\ \hline
		EFX & $\Omega\left(\frac{1}{m}\binom{m}{(m-1)/2}\right)$ \citep{PlautRo20}  & $\Theta(m)$ (Th. \ref{thm:EFXadditive})  \\ \hline
		EF1 & $\Theta(\log m)$ (Th. \ref{thm:EF1monotonic}, Prop. \ref{prop:EF1binary})  & $\Theta(\log m)$ (Th. \ref{thm:EF1monotonic}, Prop. \ref{prop:EF1binary})  \\ \hline
    \end{tabular}
    \vspace{5mm}
    \caption{Query complexity in the setting with two agents. All lower bounds hold even when the two agents have identical utilities.}
    \label{table:summary}
\end{center}
\end{table*}

Given that an EF1 allocation always exists and can be found efficiently at this level of generality, a natural question to ask is how much we need to know about the agents' utilities to compute such an allocation. This issue is crucial for combinatorial utilities, since merely writing down a complete utility function might already take exponential time. But the question is equally important for additive utilities; while expressing such a utility function only takes linear time, this may already be prohibitive if the number of goods is very large. In fact, the goods application on Spliddit elicits additive utilities and computes an EF1 allocation~\citep{CKMP+16}; the largest instance that was encountered involved ten siblings and roughly 1400 goods. In this case, the siblings actually prepared a spreadsheet with their value for each of the goods!
As we will see, perhaps surprisingly, in several cases there do exist algorithms that compute EF1 allocations using only a logarithmic number of queries, despite the fact that the size of the input can be linear or even exponential.\footnote{We also remark here that our algorithms with logarithmic query complexity allow agents to specify an arbitrary (common) ordering of the goods, and only request their utilities of subsets that are composed of a (small) constant number of contiguous blocks with respect to this ordering. In particular, the algorithms do not request utilities of ``random'' subsets like $\{1,3,6,7,10,13,15,\dots,1397,1400\}$, which can be cumbersome for the agents to determine.}

\subsection{Our Results}

We allow algorithms to elicit the utilities of agents via a standard interface, \emph{value queries}, which ask an agent for her value for a given subset of goods. We assume that algorithms are deterministic and measure their complexity in terms of the worst-case number of queries that they require.

In Section~\ref{sec:two} we consider the case of two agents. We show that it is possible to compute an EF1 allocation for agents with arbitrary monotonic utilities using a logarithmic number of queries (Theorem~\ref{thm:EF1monotonic}). This is asymptotically tight, even for two agents with identical and very simple binary utilities (Proposition~\ref{prop:EF1binary}). We then turn to envy-freeness and establish that determining whether an envy-free allocation exists takes an exponential number of queries for agents with identical monotonic utilities (Proposition~\ref{prop:EFmonotonic}) and a linear number of queries for agents with identical additive utilities (Theorem~\ref{thm:EFadditive}); our latter bound is also exactly tight. We end our investigation of the two-agent case by considering another relaxation of envy-freeness called \emph{envy-freeness up to any good (EFX)}, a stronger notion than EF1. We show that computing an EFX allocation already takes a linear number of queries for agents with identical additive utilities (Theorem~\ref{thm:EFXadditive}). This complements a recent result of \citet{PlautRo20}, who showed that while an EFX allocation always exists for two agents with arbitrary monotonic utilities, computing one such allocation already requires an exponential number of queries in the worst case, even when the utilities of the agents are identical. Taken together, these results suggest that, when the number of goods is large, EF1 is the `right' notion of fairness, whereas EFX is too demanding. 
The results of Section~\ref{sec:two} are summarized in Table~\ref{table:summary}.

In Section~\ref{sec:three} we address the case of three agents. Our main result is an algorithm that computes an EF1 allocation for three agents with additive utilities using a logarithmic number of queries (Theorem~\ref{thm:threeadditivealgo}). Our algorithm adapts the Selfridge-Conway procedure, a classical \emph{cake-cutting} protocol for computing an envy-free allocation of a heterogeneous \emph{divisible} good, to the setting of indivisible goods. In particular, as a building block we use an algorithm that, for three agents with \emph{identical} additive utilities, computes an EF1 allocation satisfying the extra property that any three predetermined goods belong to three different bundles (Lemma~\ref{LEM:SEPARATE}). Moreover, we show that by adapting a recent algorithm of \citet{BCFI+19}, it is possible to compute an EF1 allocation for three agents with arbitrary monotonic utilities using a polylogarithmic number of queries (Theorem~\ref{thm:threemonotonic}).

Finally, in Section~\ref{sec:many} we consider the setting where there can be any number of agents. We show that the envy cycle elimination algorithm of \citet{LMMS04} can be implemented using a relatively modest number of queries (Theorem~\ref{thm:manyupper-envygraph}). In addition, we propose algorithms that use fewer queries under stronger assumptions on agents' utilities, such as when the agents have identical monotonic utilities (Theorem~\ref{thm:manyupper-identical}). 
To complement these positive results, we conclude by presenting a lower bound on the number of queries needed to compute an EF1 allocation (Theorem~\ref{thm:manylower}).

\subsection{Related Work}

The paper that is most closely related to ours is the one mentioned above, by \citet{PlautRo20}. Using an interesting reduction from the local search problem on a class of graphs known as Kneser graphs, they show that the problem of finding an EFX allocation requires an exponential number of queries, even for two agents with identical utilities. They also examine when EFX can be achieved in conjunction with other properties such as Pareto optimality, and establish the existence of allocations satisfying an approximate version of EFX for agents with subadditive utilities.
In a follow-up paper, \citet{PlautRo20-2} explore communication complexity in the fair allocation of discrete items.

A bit further afield, query complexity has long been a topic of interest in computational fair division, albeit in the context of \emph{divisible} goods, also known as cake cutting \citep{Pro13}. The standard query model for cake cutting is due to \citet{RW98}, and allows two types of operations: evaluate (which is similar to our value queries) and cut. In this model, the query complexity of achieving fair cake allocations, under various notions of fairness, is well-studied \citep{Pro09,EP06,DQS12,KLP13,AM16b,BN17,PW17,ElkindSeSu21}.
Note that in cake cutting, it is commonly assumed that the cake forms an interval and the queries are made with respect to the interval. On the other hand, in our indivisible goods setting the goods do not inherently lie on a line, so there is no equivalence of the cut query in our model.
Nevertheless, a recurring technique that we will use in order to achieve (poly)logarithmic query complexity is to arrange the goods in a line and perform binary search to determine a cut point of interest.\footnote{The idea of arranging goods in a line and adapting cake-cutting procedures was also used in the work of \citet{BCFI+19}, which appeared after we published an initial version of our paper.}

The envy-freeness relaxations EF1 and EFX have received considerable attention from the computer science community in the past few years, with a number of papers studying various aspects of these notions \citep{ABM18,BarmanKrVa18,BB18,BeiLuMa19,BeiIgLu21,BCFI+19,CaragiannisGrHu19,CKMP+16,KyropoulouSuVo20}.
In particular, \citet{Suksompong20} provided a partial explanation for the large gap in query complexity between EF1 and EFX by showing that for two agents with arbitrary monotonic utilities, the number of EF1 allocations is always exponential in the number of goods, while there can be as few as two EFX allocations regardless of the number of goods.
In addition to these relaxations, another well-studied fairness criterion for indivisible goods is the \emph{maximin share} \citep{Bud11,GHSS+17,KPW18,BarmanKr20}.

\section{Preliminaries}

There is a set $G=\{g_1,g_2,\dots,g_m\}$ of goods and a set $A=\{a_1,a_2,\dots,a_n\}$ of agents. A \emph{bundle} is a subset of $G$. Each agent $a_i$ has a nonnegative utility $u_i(G')$ for each $G'\subseteq G$. We sometimes abuse notation and write $u_i(g)$ for $u_i(\{g\})$. 

A utility function is said to be \emph{monotonic} if $u_i(G_1)\leq u_i(G_2)$ for any $i$ and any $G_1\subseteq G_2\subseteq G$. It is said to be \emph{additive} if $u_i(G')=\sum_{g\in G'}u_i(g)$ for any $G'\subseteq G$, and \emph{binary} if it is additive and $u_i(g)=0$ or $1$ for each $g\in G$.
While additivity is significantly more restrictive than monotonicity, many papers in fair division assume that agents' utilities are additive \citep{BL16,AMNS15,KPW18,CKMP+16}. This assumption is also made by Spliddit's app for dividing goods \citep{CKMP+16}, as, in practice, additive utilities hit a sweet spot between expressiveness and ease of elicitation. We assume throughout the paper that agents have monotonic utilities\footnote{Without this assumption, neither of the relaxations of envy-freeness that we consider can always be satisfied even when there are two agents. Indeed, for EF1, suppose that there is a single good and both agents prefer not getting the good to getting it. Then the agent who gets the good will always be envious, and there is no possibility of removing a good from the other agent's bundle since it is empty. For EFX, assume that there are two goods, and the two agents share a common utility function $u$ with $u(\emptyset)=1$, $u(\{g_1\})=u(\{g_1,g_2\})=0$, and $u(\{g_2\})=2$. Then if one agent gets no good, she is envious when we remove $g_1$ from the other agent's bundle. Else, if both agents get a good, the agent who gets $g_1$ is envious when we remove $g_2$ from the other agent's bundle.} 
and that, without loss of generality, $u_i(\emptyset)=0$ for all $i$. 

An \emph{allocation} is a partition of $G$ into $n$ bundles $(G_1,G_2,\dots,G_n)$, where bundle $i$ is allocated to agent $i$. If the goods lie on a line, for each good $g$ we denote by $L_g$ and $R_g$ the set of goods to the left and right of $g$, respectively. A \emph{contiguous} allocation is an allocation in which every bundle forms a contiguous block on the line.

We now define the fairness notions that we consider.
\begin{definition}
An allocation $(G_1,G_2,\dots,G_n)$ is said to be 
\begin{itemize}
\item \emph{envy-free} if $u_i(G_i)\geq u_i(G_j)$ for any $i,j$. 
\item \emph{envy-free up to any good (EFX)} if for any $i,j$ and any good $g\in G_j$, $u_i(G_i)\geq u_i(G_j\backslash\{g\})$.
\item \emph{envy-free up to one good (EF1)} if for any $i,j$ such that $u_i(G_i)<u_i(G_j)$, there exists a good $g\in G_j$ such that $u_i(G_i)\geq u_i(G_j\backslash\{g\})$.\footnote{The clause ``such that $u_i(G_i)<u_i(G_j)$'' is necessary for the case where $G_j=\emptyset$.}
\end{itemize}
\end{definition}
It is easy to see that envy-freeness is stronger than EFX, which is in turn stronger than EF1. Envy-freeness is a classical and well-studied fairness notion that goes back to \citet{Fol67}. By contrast, its two relaxations are relatively new: EF1 was introduced by \citet{Bud11} and a related property was studied by \citet{LMMS04}, while EFX was only proposed recently by \citet{CKMP+16}. 

We will consider algorithms that compute fair allocations according to these fairness notions. In order to discover the agents' utilities, an algorithm is allowed to issue \emph{value queries}. In each query, the algorithm chooses an agent $a_i$ and a subset $G'\subseteq G$, and finds out the value of $u_i(G')$. We assume that the algorithm is deterministic, and allow it to be \emph{adaptive}, i.e., the algorithm can determine its next query based on its past queries and the corresponding answers.

\section{Two Agents}
\label{sec:two}

In this section, we consider the setting with two agents. We organize our results based on fairness notion: EF1, envy-freeness, and EFX. 

\subsection{EF1}

We begin by describing an algorithm that computes an EF1 allocation for two agents with arbitrary monotonic utilities. The algorithm is similar to the cut-and-choose protocol for cake cutting: the first agent partitions the goods into two bundles with the property that she would be satisfied with either bundle, and the second agent chooses the bundle that she prefers. In order to minimize the number of queries, we arrange the goods on a line and use binary search to determine the cut point of the first agent.
Recall that for each good $g$, we denote by $L_g$ and $R_g$ the set of goods to the left and right of $g$, respectively.

\begin{framed}
\noindent
\textbf{Algorithm~1} (for two agents with monotonic utilities) \\

\noindent
\emph{Step~1:} Arrange the goods on a line in arbitrary order. Find the rightmost good $g$ such that  $u_1(L_g)\leq u_1(R_g\cup\{g\})$. \\

\noindent
\emph{Step~2:} If $u_1(L_g)\leq u_1(R_g)$, consider the partition $(L_g\cup\{g\}, R_g)$; else, consider the partition $(L_g, R_g\cup\{g\})$. Give $a_2$ the bundle from the partition that she prefers, and $a_1$ the remaining bundle.
\end{framed}

We claim that the algorithm computes an EF1 allocation using a logarithmic number of queries.

\begin{theorem}
\label{thm:EF1monotonic}
For two agents with arbitrary monotonic utilities, Algorithm~1 computes an EF1 allocation. Moreover, the algorithm can be implemented to use $O(\log m)$ queries in the worst case.
\end{theorem}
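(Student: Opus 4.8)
The plan is to prove the two claims separately: first that the output is EF1, then that the cut good in Step~1 can be located by binary search. The crux of both arguments is the monotonicity of agent~1's valuation along the line. I would begin by observing that as the candidate good $g$ sweeps rightward, the left block $L_g$ only gains goods while the block $R_g\cup\{g\}$ only loses them; hence $u_1(L_g)$ is nondecreasing and $u_1(R_g\cup\{g\})$ is nonincreasing as $g$ moves right. Consequently the predicate ``$u_1(L_g)\le u_1(R_g\cup\{g\})$'' holds on an initial segment of the line and fails thereafter, so a unique rightmost good $g$ satisfying it exists (the leftmost good always satisfies it, since then $L_g=\emptyset$ and $u_1(\emptyset)=0$), and the predicate can be evaluated by binary search.

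Next I would extract the two inequalities that pin down this rightmost $g$. Because $g$ satisfies the predicate we have $u_1(L_g)\le u_1(R_g\cup\{g\})$ (call this $(\star)$). If $g$ is not the last good, its right neighbor $g'$ violates the predicate; rewriting $L_{g'}=L_g\cup\{g\}$ and $R_{g'}\cup\{g'\}=R_g$ yields $u_1(L_g\cup\{g\})>u_1(R_g)$ (call this $(\star\star)$). When $g$ is the last good, $R_g=\emptyset$ and $(\star\star)$ holds trivially by nonnegativity. These two bounds, together with the Step~2 test comparing $u_1(L_g)$ and $u_1(R_g)$, are exactly what the correctness proof needs.

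For correctness I would first dispatch agent~2: since she is handed whichever of the two bundles she values more, $u_2$ of her bundle is at least $u_2$ of agent~1's bundle, so she is envy-free and a fortiori EF1. It remains to check agent~1 in each of the four subcases (two Step~2 cases, times which bundle agent~2 takes). In the Step~2 case $u_1(L_g)\le u_1(R_g)$ with partition $(L_g\cup\{g\},R_g)$: if agent~1 keeps $L_g\cup\{g\}$ she is envy-free by $(\star\star)$; if she keeps $R_g$, she may envy $L_g\cup\{g\}$, but deleting the good $g$ from it leaves $L_g$, and $u_1(L_g)\le u_1(R_g)$ gives EF1. The symmetric Step~2 case $u_1(L_g)>u_1(R_g)$ with partition $(L_g,R_g\cup\{g\})$ is handled the same way, using $(\star)$ for the envy-free direction and deleting $g$ together with $u_1(R_g)<u_1(L_g)$ for the EF1 direction. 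I expect this case analysis to be the main obstacle: the subtlety is that the good whose removal restores fairness is always the cut good $g$, and seeing this requires matching each subcase to the correct one of $(\star)$, $(\star\star)$, and the Step~2 inequality, while also checking the boundary case $g=g_m$ where $R_g=\emptyset$.

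Finally, for the query bound, Step~1 evaluates the predicate $O(\log m)$ times by binary search, each evaluation costing two value queries to agent~1 (for $u_1(L_g)$ and $u_1(R_g\cup\{g\})$). Step~2 needs one further query for $u_1(R_g)$ to decide the case and at most two queries to $u_2$ for agent~2's choice. This totals $O(\log m)$ queries in the worst case, completing the proof.
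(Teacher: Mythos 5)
Your proposal is correct and follows essentially the same route as the paper's proof: agent~2 is envy-free by choice, and agent~1's EF1 guarantee comes from the two inequalities pinned down by the rightmost good satisfying the predicate (with the boundary case $g=g_m$ handled separately), while the query bound follows from binary search on the monotone predicate. The only addition is that you spell out why the predicate holds on a prefix of the line, which the paper leaves implicit under ``by monotonicity.''
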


\begin{proof}
We first show that the algorithm computes an EF1 allocation. Since $a_2$ gets the bundle that she prefers, she does not envy $a_1$. 

To reason about $a_1$'s envy, assume first that $u_1(L_g)\leq u_1(R_g)$. It holds that $u_1(L_g\cup\{g\})\geq u_1(R_g)$: This is clearly true if $g$ is the rightmost good on the line, and otherwise it follows from the definition of $g$ that $u_1(R_g)<u_1(L_g\cup\{g\})$. Therefore, if $a_1$ receives $L_g\cup\{g\}$, she is not envious at all. And if she receives $R_g$, it holds that 
$$u_1(R_g)\geq u_1(L_g)= u_1((L_g\cup\{g\})\setminus \{g\}),$$ 
so EF1 is satisfied. 

The second case is where $u_1(L_g)>u_1(R_g)$. If $a_1$ gets $R_g\cup\{g\}$ then she is not envious, since, by the definition of $g$, $u_1(R_g\cup\{g\})\geq u_1(L_g)$. If she gets $L_g$ instead, then EF1 holds, because 
$$u_1(L_g)>u_1(R_g)=u_1((R_g\cup\{g\})\setminus \{g\}).$$

Next, we show that the algorithm can be implemented to use $O(\log m)$ queries. By monotonicity, Step~1 can be done by binary search using $O(\log m)$ queries. In Step~2, we use two queries to compare $u_1(L_g)$ and $u_1(R_g)$, and two more queries to compare $a_2$'s utility for the two bundles in the partition. Hence the total number of queries is $O(\log m)$.
\end{proof}

The following proposition shows that the bound $O(\log m)$ in Theorem~\ref{thm:EF1monotonic} is tight. 

\begin{proposition}
\label{prop:EF1binary}
Any deterministic algorithm that computes an EF1 allocation for two agents with identical binary utilities uses $\Omega(\log m)$ queries in the worst case, even when each agent values only two goods.
\end{proposition}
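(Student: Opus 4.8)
The plan is to exhibit a family of hard instances and run an adaptive adversary argument against the algorithm's (ternary) decision tree. First I would fix identical binary valuations in which exactly two goods are worth $1$ and the rest are worth $0$; such an instance is described by the unordered pair $V=\{x,y\}$ of valued goods, so there are $\binom{m}{2}$ instances. The first step is to characterize EF1 on this family: I claim an allocation $(G_1,G_2)$ is EF1 if and only if $V$ is \emph{separated}, i.e., $x$ and $y$ lie in different bundles. Indeed, if both valued goods lie in one bundle then the other agent has value $0$ while deleting any single good from the full bundle still leaves value at least $1$, so EF1 fails; and if the two goods are separated then both bundles have value $1$ and the allocation is even envy-free. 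Hence the algorithm's task is exactly to output, without knowing $V$, a bipartition of $G$ that separates the hidden pair.

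Next I set up the adversary. I maintain the set $S$ of instances consistent with the answers given so far and view it as the edge set of a graph $H$ on the goods, starting from $H=K_m$ (all pairs alive). A value query is a subset $Q\subseteq G$, and since the valuations are identical the answer is $|V\cap Q|\in\{0,1,2\}$ regardless of which agent is queried. Answering $1$ restricts $S$ to the edges crossing the cut $(Q,\overline{Q})$, which form a bipartite graph, so the adversary never answers $1$; answering $0$ or $2$ restricts $S$ to the edges induced inside $\overline{Q}$ or inside $Q$. Consequently, after any number of queries $S$ is always the set of all pairs inside some vertex set $U$, and each query merely lets the adversary replace $U$ by one of its two sides. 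The adversary always keeps the larger side.

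The crux is a robustness property of the complete graph combined with the bipartite characterization of separability: a single bipartition of $G$ separates every pair in $S$ if and only if the graph $(G,S)$ is bipartite, and the induced graph $K_{|U|}$ is non-bipartite as soon as $|U|\ge 3$. Thus, whenever $|U|\ge 3$, no output can be correct, since by pigeonhole some two elements of $U$ share a side and the adversary can declare that (still alive) pair to be $V$, making the output non-EF1 while staying consistent with all answers. Because keeping the larger side at most halves $|U|$, we have $|U|\ge m/2^{t}$ after $t$ queries, so $|U|\ge 3$ for every $t\le \log_2(m/3)$; in this range the larger side always has at least $3$ vertices, so the $0/2$ answer the adversary wants is always consistent and the invariant survives. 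Since the algorithm cannot terminate correctly while $|U|\ge 3$, any correct algorithm uses more than $\lfloor\log_2(m/3)\rfloor=\Omega(\log m)$ queries.

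The main obstacle to anticipate is that a naive counting bound is far too weak: a single balanced output cut already separates $\Theta(m^2)$ of the $\binom{m}{2}$ pairs, so counting leaves of the decision tree yields only a constant lower bound. The real work is in identifying non-bipartiteness as the exact obstruction to a common separating cut and in choosing the instance family ($K_m$) so that this obstruction survives adversarial halving for logarithmically many rounds. The remaining parts are routine: verifying that the adversary is never forced to answer $1$, and that the kept larger side always supports a consistent $0$ or $2$ answer.
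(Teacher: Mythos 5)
Your proof is correct and follows essentially the same route as the paper, which establishes this proposition as the $n=2$ case of Theorem~\ref{thm:manylower}: an adversary maintains a set $U$ guaranteed to contain both valued goods, answers $0$ or $2$ so as to keep the larger side, and invokes pigeonhole to show no output can separate the hidden pair while $|U|\ge 3$. Your bipartiteness framing is an equivalent (slightly dressed-up) version of the paper's pigeonhole step, and the rest matches the paper's halving argument.
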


Since Proposition~\ref{prop:EF1binary} will later be generalized by Theorem~\ref{thm:manylower}, we do not present its proof here.

\subsection{Envy-freeness}

Next, we turn our attention to envy-freeness. Unlike the case of EF1, allocations that satisfy envy-freeness are not guaranteed to exist.  We show that for two agents with identical monotonic utilities, even an algorithm that only decides whether an envy-free allocation exists already needs to make an exponential number of queries in the worst case. A similar argument holds for algorithms that compute an envy-free allocation whenever one exists. 

\begin{proposition}
\label{prop:EFmonotonic}
Assume that $m$ is even. Any deterministic algorithm that determines whether an envy-free allocation exists for two agents with identical monotonic utilities uses at least $\binom{m}{m/2}$ queries in the worst case.
\end{proposition}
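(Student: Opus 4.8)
The plan is to reduce the decision problem to a purely combinatorial one and then run an adversary argument against an arbitrary deterministic algorithm. First I would use the fact that the two agents share a single valuation $u$: an allocation $(G_1,G_2)$ is envy-free exactly when $u(G_1)\ge u(G_2)$ and $u(G_2)\ge u(G_1)$, i.e.\ when $u(G_1)=u(G_2)$. Hence an envy-free allocation exists if and only if there is a set $S\subseteq G$ with $u(S)=u(G\setminus S)$. The next step is to restrict attention to \emph{balanced} sets by designing a family of hard valuations: take $u(S)=f(|S|)$ whenever $|S|\neq m/2$, where $f$ is strictly increasing with large gaps, while leaving the values of the $\binom{m}{m/2}$ sets of size $m/2$ free within a tiny interval around $f(m/2)$. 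For such valuations any complementary pair $\{S,G\setminus S\}$ of unequal sizes satisfies $u(S)\neq u(G\setminus S)$, so an envy-free allocation exists if and only if some complementary pair of size-$m/2$ sets receives equal value.

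With this reduction in hand, the core of the argument is an adversary that answers the algorithm's queries so as to keep both possible answers open. I would have the adversary reply to any query $S$ with $|S|\neq m/2$ by $f(|S|)$, and to any query $S$ with $|S|=m/2$ by a \emph{fresh} value in the reserved interval, chosen distinct from every value returned so far (in particular distinct from the value of $G\setminus S$, should that set already have been queried). Monotonicity of the resulting partial valuation is immediate from the gap structure of $f$, and the key invariant maintained is that every pair of \emph{queried} size-$m/2$ sets has received distinct values.

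The crux is the indistinguishability claim: as long as some size-$m/2$ set $S^{*}$ has not been queried, the transcript so far is consistent both with a ``no''-valuation (extend all unqueried sets to pairwise distinct fresh values, so that no complementary pair is equal) and with a ``yes''-valuation (set $u(S^{*})$ equal to the already-committed value of $G\setminus S^{*}$, creating exactly one equal pair, and extend the remaining unqueried sets distinctly). Both extensions are monotonic and agree with every answer given, yet they have opposite correct outputs, so the algorithm must err on one of them. I expect verifying this claim to be the main obstacle: it requires checking monotonicity across the boundary between the reserved interval and the values $f(m/2\pm 1)$, confirming that neither extension contradicts any committed answer, and ruling out any accidental equal complementary pair of size other than $m/2$.

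Finally I would assemble the counting bound. The indistinguishability claim shows that a correct deterministic algorithm cannot halt while any size-$m/2$ set remains unqueried, since the adversary could otherwise force an error; thus every one of the $\binom{m}{m/2}$ sets of size $m/2$ must be queried. This yields the lower bound of $\binom{m}{m/2}$ queries in the worst case.
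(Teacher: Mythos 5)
Your proof is correct and takes essentially the same approach as the paper's: fix a common valuation that is strictly increasing in set size so that only balanced splits can be envy-free, then observe that the values of the $\binom{m}{m/2}$ sets of size $m/2$ are unconstrained, so an adversary can keep both answers alive until every such set has been queried. Your write-up simply makes explicit the adversary strategy and the monotonicity and consistency checks that the paper's shorter argument leaves implicit.
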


\begin{proof}
Assume that the common utility function $u$ is such that $u(G_1)<u(G_2)$ whenever $|G_1|<|G_2|$. With this utility function, any envy-free allocation must give an equal number of goods to both agents. Since the values of subsets of size $m/2$ can be arbitrary, the algorithm must, in the worst case, query all such subsets in order to determine whether there exists a set $G'$ such that $u(G')=u(G\backslash G')$. Hence the algorithm needs to query $\binom{m}{m/2}$ subsets in the worst case.
\end{proof}

Even though an algorithm that decides whether an envy-free allocation exists needs to make an exponential number of queries for agents with monotonic utilities, when we restrict our attention to agents with additive utilities, the exponential lower bound no longer holds since the algorithm can query the value of both agents for every good and find out the full utility functions. It is conceivable that there are algorithms that do even better asymptotically, e.g., use a logarithmic number of queries. However, we show that this is not the case: a linear number of queries is necessary, even when the two agents have identical utilities.\footnote{When the two agents have identical utilities, the problem of determining whether an envy-free allocation exists is clearly equivalent to the well-known \textsc{Partition} problem. However, the NP-hardness of \textsc{Partition} does not carry over to our setting, as we are only interested in the query complexity.} In fact, we leverage linear-algebraic techniques to establish that at least $m$ queries are needed in this case---to the best of our knowledge, this is the first time that such techniques have been used to show lower bounds in fair division. This bound is tight for two identical agents since the algorithm can find out the common utility function by querying the value of each of the $m$ goods.

\begin{theorem}
\label{thm:EFadditive}
Assume that $m$ is even. Any deterministic algorithm that decides whether an envy-free allocation exists for two agents with identical additive utilities uses at least $m$ queries in the worst case.
\end{theorem}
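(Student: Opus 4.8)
The plan is to first strip the problem down to a statement about partitions and then treat the value queries as linear measurements. For two agents with a common additive valuation $u$, write $v=(u(g_1),\dots,u(g_m))\in\mathbb{R}^m_{\ge 0}$, so that a query on $G'$ returns $u(G')=\langle \mathbf{1}_{G'},v\rangle$, the inner product of $v$ with the $0/1$ indicator vector of $G'$. For an allocation $(G_1,G_2)$ of the two identical agents, envy-freeness forces $u(G_1)\ge u(G_2)$ and $u(G_2)\ge u(G_1)$, hence $u(G_1)=u(G_2)=u(G)/2$ by additivity. Thus an envy-free allocation exists if and only if some $S\subseteq G$ satisfies $\langle \chi_S,v\rangle=0$, where $\chi_S=\mathbf{1}_S-\mathbf{1}_{S^c}\in\{-1,1\}^m$. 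Equivalently, the YES-instances are exactly the points lying on the union $\bigcup_{\chi\in\{-1,1\}^m}H_\chi$ of \emph{balance hyperplanes} $H_\chi=\{v:\langle\chi,v\rangle=0\}$, and (when $u(G)>0$) only the both-sign vectors $\chi$ correspond to genuine allocations.

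\textbf{Core linear-algebraic step.} Suppose the algorithm issues queries $Q_1,\dots,Q_k$, let $M$ be the matrix whose rows are $\mathbf{1}_{Q_1},\dots,\mathbf{1}_{Q_k}$, and let $W=\operatorname{span}\{\mathbf{1}_{Q_1},\dots,\mathbf{1}_{Q_k}\}$, so that the valuations consistent with the received answers form an affine translate of $\ker M=W^\perp$. The crucial fact is that the balance directions $\{-1,1\}^m$ span $\mathbb{R}^m$. Hence if $k<m$, so that $\dim W<m$, there is a both-sign $\chi_0\in\{-1,1\}^m$ with $\chi_0\notin W$, equivalently a direction $w\in\ker M$ with $\langle\chi_0,w\rangle\ne 0$. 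Moving along $w$ leaves every query answer unchanged while varying $\langle\chi_0,\cdot\rangle$, so it lets me slide between a valuation on $H_{\chi_0}$ (a YES-instance) and a nearby valuation off all balance hyperplanes (a NO-instance) that the algorithm cannot tell apart; it must therefore err on one of them. Conversely, to be able to certify a NO answer the consistent affine set must avoid every $H_\chi$, and a positive-dimensional affine set avoids a hyperplane only if parallel to it; being parallel to all $H_\chi$ forces $\ker M\subseteq\bigcap_\chi\chi^\perp=\{0\}$, i.e.\ $\operatorname{rank}M=m$. This pins the bound at exactly $m$ and dovetails with the tight construction of querying each singleton.

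\textbf{Main obstacle.} The delicate part is \emph{legality and adaptivity}: both indistinguishable valuations must be nonnegative (indeed I will take them strictly positive), and the subspace $W$ is not known in advance because the queries depend on the answers the adversary returns. I would address both with an adaptive adversary that maintains, after each query, a strictly positive YES-instance $y$ and a strictly positive NO-instance $n$ that agree on all queries asked so far; the even-$m$ hypothesis supplies a convenient initial YES-instance, such as a balanced split into two sets of size $m/2$. If the next query $Q$ does not separate the pair, the adversary returns their common value $\langle\mathbf{1}_Q,y\rangle=\langle\mathbf{1}_Q,n\rangle$ and keeps $y,n$. If $Q$ does separate them, the rank of the asked queries increases but remains at most $m-1$ as long as fewer than $m$ queries have been asked, so the current consistent region still has affine dimension at least one; I would then choose the answer to $Q$ so that the resulting slice both meets a balance hyperplane and retains interior (hence NO) points, extracting a new positive pair. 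Because these valuations are obtained by arbitrarily small perturbations inside a region already known to be strictly positive, nonnegativity comes for free, and the circularity of adaptivity is dissolved by never committing to $W$ ahead of time.

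\textbf{Where I expect the difficulty to concentrate.} The step I anticipate being the real work is the separating case at the moment the query rank reaches $m-1$, where the consistent set collapses to a line and I must produce an answer whose line \emph{straddles} the YES and NO regions simultaneously. I expect this to reduce to a genericity statement: the functional $\langle\mathbf{1}_Q,\cdot\rangle$ is non-constant along the balance segment $H_{\chi_0}$ through the current $y$ inside the consistent region, so that a whole interval of answer values yields a line hitting $H_{\chi_0}$ while still carrying off-hyperplane points. Ruling out the degenerate coincidence where this fails (the query level sets happening to align with the balance segment) by an infinitesimal perturbation of $y$ and $n$ is the last technical hurdle; everything else is bookkeeping on the rank and on strict positivity.
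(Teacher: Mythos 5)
Your plan is, in outline, the paper's own proof: treat queries as linear measurements, identify the existence of an envy-free allocation with membership in the union of balance hyperplanes $H_\chi$, and run an adaptive adversary that, while the query vectors span a proper subspace, keeps consistent both a valuation lying on some $H_{\chi_0}$ with $\chi_0$ outside the span and a valuation avoiding every $H_\chi$. Your rule ``choose the answer so the slice still meets a balance hyperplane and still contains off-hyperplane points'' is realized in the paper by forbidding, for each $\pm 1$ vector that newly enters the span of the queries, the unique answer value that would force its inner product with $v$ to vanish (finitely many forbidden values per query, so an admissible answer always exists). Two steps of your sketch, however, do not survive scrutiny as written. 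First, nonnegativity does not ``come for free'': after $m-1$ linearly independent queries the consistent set is a line, and imposing $\langle\chi_0,v\rangle=0$ pins the YES instance down \emph{uniquely}, leaving nothing to perturb. Its positivity is a global consequence of all the earlier answer choices, mediated by the inverse of the full $m\times m$ constraint matrix, and ``arbitrarily small perturbations at each step'' give no control over that inverse. The paper closes this by anchoring every answer at the query's cardinality (so that the all-ones vector, a YES instance since $m$ is even, is the reference solution) and choosing a single $\delta>0$ that works \emph{uniformly} over the finitely many invertible $m\times m$ matrices with entries in $\{-1,0,1\}$; answering within $\delta$ of $|Q|$ and setting the final balance value in $(-\delta,\delta)$ then forces every coordinate of the unique solution into $(0,2)$. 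This is a quantitative finiteness/compactness device, not a genericity statement --- the bad event (a negative coordinate) is an open condition, not a measure-zero coincidence that an infinitesimal perturbation can dodge.

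Second, for that anchoring to work, the direction $\chi_0$ placed outside the query span must be \emph{balanced} (equal numbers of $+1$ and $-1$ entries), not merely both-sign: if $\chi_0$ is unbalanced then $\langle\chi_0,\mathbf{1}\rangle$ can be as large as $m-2$ in absolute value, the hyperplane $H_{\chi_0}$ passes far from the all-ones anchor, and the uniquely determined YES solution need not be nonnegative. Your spanning argument (``the $\pm 1$ vectors span $\mathbb{R}^m$, so some both-sign one escapes any proper subspace'') therefore proves the wrong statement. The balanced vectors span only the hyperplane $\mathbf{1}^\perp$, which has the same dimension $m-1$ as the query span after $m-1$ independent queries, so escaping is not automatic; the paper supplies the missing observation that each query vector is a nonzero $0/1$ vector and hence not orthogonal to $\mathbf{1}$, so the query span can never coincide with $\mathbf{1}^\perp$ and a balanced vector outside it must exist. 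With these two repairs your argument becomes the paper's proof.
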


\begin{proof}
For ease of notation, let $x_i=u(g_i)$ for $i=1,2,\dots,m$, where $u$ is the common utility function. Note that an envy-free allocation exists if and only if the goods can be partitioned into two sets of equal value. Consider an algorithm that always uses at most $m-1$ queries. Assume without loss of generality that the algorithm always uses exactly $m-1$ queries; whenever it uses fewer than $m-1$ queries, we add arbitrary queries for the algorithm. The key idea is that for each query, if the queried subset has size $s$, we will give an answer close to $s$ in such a way that after all $m-1$ queries, it is still possible that there exists an envy-free allocation, but also that there does not exist one. This will allow us to obtain the desired conclusion.

For $i=1,2,\dots,m-1$, let $\mathbf{v}_i$ be a vector of length $m$ where the $j$th component is 1 if good $g_j$ is included in the $i$th query, and 0 otherwise. Therefore the $i$th query asks for the value $\mathbf{v}_i\cdot\mathbf{x}=\sum_{j=1}^m v_{i,j}x_j$. Furthermore, let $W$ be the set of all vectors of length $m$ all of whose components are $\pm 1$ (so $|W|=2^m$), and let $W'\subset W$ be the set of vectors with an equal number of $-1$ and 1. Note that an envy-free allocation exists exactly when $\mathbf{w}\cdot\mathbf{x}=0$ for some $\mathbf{w}\in W$.

When we receive the $i$th query, if $\mathbf{v}_i\in\text{span}(\mathbf{v}_1,\mathbf{v}_2,\dots,\mathbf{v}_{i-1})$, our answer is already determined by previous answers. Hence, we may assume without loss of generality that $\mathbf{v}_i\not\in\text{span}(\mathbf{v}_1,\mathbf{v}_2,\dots,\mathbf{v}_{i-1})$ for all $i$. For each $\mathbf{w}\in W$ such that $\mathbf{w}\in\text{span}(\mathbf{v}_1,\mathbf{v}_2,\dots,\mathbf{v}_i)\backslash\text{span}(\mathbf{v}_1,\mathbf{v}_2,\dots,\mathbf{v}_{i-1})$, there exists a unique answer that would force $\mathbf{w}\cdot\mathbf{x}=0$. We avoid all such (finite number of) answers. After query number $m-1$ we have a subspace $\mathcal{V}=\text{span}(\mathbf{v}_1,\mathbf{v}_2,\dots,\mathbf{v}_{m-1})$ of dimension $m-1$ such that we know the value $\mathbf{v}\cdot\mathbf{x}$ if and only if $\mathbf{v}$ is in the subspace. 

Next, let $\mathcal{W}'=\text{span}(W')$. Clearly, all vectors in $\mathcal{W}'$ are orthogonal to the vector $(1,1,\dots,1)$. We claim that $\mathcal{W}'$ in fact has dimension $m-1$, and therefore consists of \emph{all} vectors orthogonal to $(1,1,\dots,1)$. To see this, take two distinct vectors in $W'$ that differ only in the first and $i$th component for some $i=2,3,\dots,m$. The difference vector, which consists of a 2 in the first position, a $-2$ in the $i$th position, and 0 elsewhere belongs to $\mathcal{W}'$. It is clear that no nontrivial linear combination of these difference vectors can produce the all-zero vector, meaning that the $m-1$ vectors are linearly independent, and thus $\mathcal{W}'$ indeed has dimension $m-1$. 

Now, since any vector $\mathbf{v}_i$ is not orthogonal to $(1,1,\dots,1)$, we have $\mathcal{V}\neq \mathcal{W}'$, and so there exists $\mathbf{w}'\in W'$ such that $\mathbf{w}'\not\in\mathcal{V}$. (If this were not the case, we would have $\mathcal{W}'\subseteq\mathcal{V}$, and then the two subspaces would be equal because they are of the same dimension.) Since $\mathcal{V}$ is of dimension $m-1$ and $\mathbf{w}'\not\in\mathcal{V}$, setting the value of $\mathbf{w}'\cdot\mathbf{x}$ will, in combination with the constraints resulting from our answers to the $m-1$ queries, uniquely determine $\mathbf{x}$. If we set $\mathbf{w}'\cdot\mathbf{x}=0$, an envy-free allocation exists. On the other hand, if we set $\mathbf{w}'\cdot\mathbf{x}$ so that $\mathbf{w}\cdot\mathbf{x}\neq 0$ for all $\mathbf{w}\in W$, an envy-free allocation does not exist. This choice of value for $\mathbf{w}'\cdot\mathbf{x}$ is available because for each $\mathbf{w}\in W$, only one value of $\mathbf{w}'\cdot\mathbf{x}$ forces $\mathbf{w}\cdot\mathbf{x}=0$.

It remains to show that we can give the answers in such a way that after setting the value of $\mathbf{w}'\cdot\mathbf{x}$, all components of the unique solution for $\mathbf{x}$ are nonnegative. We choose $\delta>0$ such that for any vector $\mathbf{z}$ with $|z_i|<\delta$ for all $i=1,2,\dots,m$, and any $m\times m$ invertible matrix $M$ all of whose entries are $-1$, 0, or 1, the unique solution $\mathbf{y}$ to $M\mathbf{y}=\mathbf{z}$ has $|y_i|<1$ for all $i$. 
To see that this choice of $\delta$ can be made, note that we may find $\delta_M$ for each such matrix $M$ and take $\delta$ to be the minimum among the (finite number of) values $\delta_M$.
Each $M$ can be viewed as a linear transformation that takes the all-zero vector to itself, so the inverse transformation takes the ball of radius $1$ centered at the origin to within the ball of some radius $r_M$ centered at the origin. We can then take $\delta_M := 1/r_M$.

For each query on a subset of size $k$, we give an answer in the range $(k-\delta,k+\delta)$. Moreover, we choose the value of $\mathbf{w}'\cdot\mathbf{x}$ to be in the range $(-\delta,\delta)$. 
Note that these choices are always possible since for each answer, there are only a finite number of forbidden choices.
Write $y_i=x_i-1$ for all $i$, where $\mathbf{x}$ is the unique solution according to our choices. Our answers to the queries ensure that the values of $\mathbf{v}_i\cdot\mathbf{y}$ for $i=1,2,\dots,m-1$ belong to the range $(-\delta,\delta)$, and our choice of $\mathbf{w}'\cdot\mathbf{x}$ ensures that $\mathbf{w}'\cdot\mathbf{y}$ also belong to this range.
Take $M$ to be the $m\times m$ matrix with $\mathbf{v}_1,\mathbf{v}_2,\dots,\mathbf{v}_{m-1}$ and $\mathbf{w}'$ as its rows, and $\mathbf{z}$ to be a vector of length $m$ with $\mathbf{v}_1\cdot\mathbf{y},\mathbf{v}_2\cdot\mathbf{y},\dots,\mathbf{v}_{m-1}\cdot\mathbf{y}$ and $\mathbf{w}'\cdot\mathbf{y}$ as its elements.
Since all elements of $\mathbf{v}_i$ and $\mathbf{w}'$ belong to the set $\{-1,0,1\}$, all entries of $M$ also belong to this set, and hence our definition of $\delta$ implies that $|y_i|<1$ for all $i$. It follows that $x_i>0$ for all $i$, as desired.
\end{proof}

For two agents with additive utilities, envy-freeness is equivalent to another well-known fairness notion called proportionality, which requires that each agent receive at least half of her value for the whole set of goods. Thus, the lower bound in Theorem~\ref{thm:EFadditive} also holds for two agents with identical additive utilities with respect to proportionality.

\subsection{EFX}

We end this section by considering EFX. For two agents with monotonic utilities, \citet{PlautRo20} showed that an EFX allocation is guaranteed to exist, but computing it takes an exponential number of queries in the worst case. If the agents have additive utilities, however, the algorithm can already find out the full utility functions using only a linear number of queries. Our next result shows that a linear number of queries is, in fact, needed for computing an EFX allocation.

\begin{theorem}
\label{thm:EFXadditive}
Assume that $m$ is odd. Any deterministic algorithm that computes an EFX allocation for two agents with identical additive utilities uses at least $(m-1)/2$ queries in the worst case.
\end{theorem}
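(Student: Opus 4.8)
The plan is to run an adversary argument in the spirit of the proof of Theorem~\ref{thm:EFadditive}, answering every query on a subset $S$ with the value $|S|$, so that the all-ones valuation $\mathbf{x}=(1,\dots,1)$ stays consistent throughout. Writing $m=2k+1$, the target bound is $k=(m-1)/2$, so I would assume toward a contradiction that a correct deterministic algorithm uses at most $k-1$ queries and then exhibit, after the algorithm outputs its allocation, a valuation consistent with all the given answers for which the output is not EFX. The first thing to record is the shape of EFX for two agents with identical additive valuations: if the output is $(A,B)$ with $u(A)\ge u(B)$, then EFX holds if and only if $u(A)-u(B)\le \min_{g\in A}u(g)$.

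Next I would dispose of unbalanced outputs. Since $m$ is odd, $|A|-|B|$ is odd; if $\bigl|\,|A|-|B|\,\bigr|\ge 3$ then already under $\mathbf{x}=\mathbf{1}$ we have $|u(A)-u(B)|\ge 3>1=\min_g u(g)$, so the output is not EFX and the algorithm is wrong. Hence only a balanced output remains, say $|A|=k+1$ and $|B|=k$. For these I would perturb within the consistent set: let $V=\operatorname{span}(\mathbf{v}_1,\dots,\mathbf{v}_q)$ be the span of the query indicators (in the notation of Theorem~\ref{thm:EFadditive}), and consider valuations $\mathbf{x}=\mathbf{1}+\delta\,\mathbf{d}$ with $\mathbf{d}\in V^{\perp}$ and $\delta>0$ small. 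Every such valuation answers all queries by $|S|$, is strictly positive, and keeps $A$ the larger bundle since $u(A)-u(B)=1+\delta(\sum_{g\in A}d_g-\sum_{g\in B}d_g)>0$. A short computation then shows that $\mathbf{x}=\mathbf{1}+\delta\mathbf{d}$ violates EFX precisely when $\sum_{g\in A}d_g-\sum_{g\in B}d_g>\min_{g\in A}d_g$; call such a $\mathbf{d}$ a \emph{foiling direction}.

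It remains to produce a foiling direction inside $V^{\perp}$, and this is where the $\min$ term — the only non-linear feature of the EFX condition — must be handled, which I expect to be the crux of the argument. The key observation is that $V^{\perp}$ is closed under negation: if neither $\mathbf{d}$ nor $-\mathbf{d}$ is foiling, then $\sum_A d-\sum_B d\le\min_{g\in A}d_g$ and, applying the inequality to $-\mathbf{d}$, also $\sum_A d-\sum_B d\ge\max_{g\in A}d_g$, which forces $\mathbf{d}$ to be constant on $A$. Thus the absence of any foiling direction would force every vector of $V^{\perp}$ to be constant on the $k+1$ coordinates of $A$, i.e.\ $V^{\perp}\subseteq U$, where $U=\{\mathbf{d}:\mathbf{d}\text{ is constant on }A\}$ has dimension $m-|A|+1=k+1$. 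But with at most $k-1$ queries we have $\dim V^{\perp}\ge m-(k-1)=k+2>k+1=\dim U$, so $V^{\perp}\not\subseteq U$ and a foiling direction $\mathbf{d}$ exists; scaling it by a sufficiently small $\delta>0$ gives a consistent valuation on which the output fails EFX, contradicting correctness. This yields $q\ge k=(m-1)/2$, with the dimension count against the constant-on-$A$ subspace and the negation trick that linearizes the $\min$ doing all the real work, and the remaining verifications being routine.
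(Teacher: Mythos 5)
Your proof is correct, and it rests on the same adversary as the paper's: answer every query on a set $S$ with $|S|$ so that the all-ones valuation stays consistent, kill unbalanced outputs immediately, and then exhibit a consistent perturbation of the all-ones valuation that breaks EFX on a balanced output $(A,B)$ with $|A|=k+1$. Where you genuinely diverge is in how the perturbation is produced. The paper pins the $k$ goods of the smaller bundle at value $1$, treats the answers as at most $k-1$ linear constraints on the $k+1$ values in the larger bundle together with the constraint that they sum to $k+1$, and perturbs a free variable of this underdetermined system; the fixed sum then forces some good of the larger bundle below $1$, which is precisely an EFX violation, so the nonlinear $\min$ never has to be confronted. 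You instead work in the full $m$-dimensional space, characterize violation along a direction $\mathbf{d}\in V^{\perp}$ by $\sum_{g\in A}d_g-\sum_{g\in B}d_g>\min_{g\in A}d_g$, and linearize the $\min$ with the negation trick: if neither $\mathbf{d}$ nor $-\mathbf{d}$ foils, then $\mathbf{d}$ is constant on $A$, and the count $\dim V^{\perp}\ge k+2>k+1=\dim U$ rules that out. Both routes yield the same threshold $q\ge k=(m-1)/2$; yours avoids having to argue that the substituted system on the larger bundle is consistent with a free variable, while the paper's avoids any reasoning about the $\min$. The only point worth making explicit in your write-up is that when you apply the foiling criterion to $-\mathbf{d}$, the bundle $A$ is still the (weakly) larger one under $\mathbf{1}-\delta\mathbf{d}$ for small $\delta>0$, so the one-sided characterization of EFX you derived still applies; this is immediate but is the one place the symmetry argument could silently fail.
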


\begin{proof}
Let $m=2k+1$, and consider an algorithm that always uses at most $k-1$ queries. Whenever the algorithm queries a subset of size $s$, we answer that the subset has value $s$. Suppose that upon termination the algorithm outputs the allocation $(G_1,G_2)$, where we assume without loss of generality that $|G_1|\leq k$. If $|G_1|<k$, then we assign to every good a value of 1; this is clearly consistent with our answers, but the allocation is not EFX. 

Assume, therefore, that $|G_1|=k$. We assign to every good in $G_1$ a value of 1; this is again consistent with our answers to queries on subsets of $G_1$. Now, our answers give rise to at most $k-1$ linear constraints on the values of the goods in $G_2$. Add the constraint that the sum of these $k+1$ values is $k+1$. The allocation $(G_1,G_2)$ is EFX if and only if every good in $G_2$ has value 1. Since all of our constraints are satisfied when every good has value 1, our set of constraints is satisfiable. There are $k+1$ variables and at most $k$ constraints, so if we compute the reduced row-echelon form of the constraint matrix, we can find a nonempty set of free variables. The remaining variables (i.e., the leading variables) can be written as a linear combination of these free variables. If all free variables are set to 1, all leading variables must also be 1. Hence, we can perturb one of the free variables by a small amount so that all of the leading variables are still nonnegative. This yields a utility function that is consistent with our answers but according to which the allocation returned by the algorithm is not EFX.
\end{proof}

Note that Theorem~\ref{thm:EFXadditive} is incomparable with Theorem~\ref{thm:EFadditive}, even though EFX is a relaxation of envy-freeness, because the former result deals with a \emph{search} problem (finding an EFX allocation knowing that one always exists), whereas the latter deals with a \emph{decision} problem (deciding whether an EF allocation exists at all).

\section{Three Agents}
\label{sec:three}

In this section, we study the setting with three agents who are endowed with additive utilities. Our main result is an algorithm that finds an EF1 allocation using $O(\log m)$ queries, but we first need to develop some machinery for the case where the agents have identical utilities. 
Indeed, our algorithms for computing an EF1 allocation in the case of identical additive utilities (Algorithm~2 and Lemma~\ref{LEM:SEPARATE}) will be important building blocks in the algorithm for general additive utilities.

\subsection{Identical Additive Utilities}

While the case of identical additive utilities might seem trivial at first glance, as we will see, there are already several interesting statements that we can make about the setting, so it may be of independent interest. We begin by establishing some properties of a particular partition of goods on a line into two contiguous blocks.

\begin{lemma}
\label{lem:cutandchoose}
Assume that the goods lie on a line. Suppose that an agent with an additive utility function $u$ chooses the partition of the goods into two contiguous blocks that minimizes the difference between the values of the two blocks. (If there are goods of value 0 next to the cut point, move the cut point until these goods belong to the block of lower value.) Let $L$ be the left block and $g_l$ the rightmost good of the block. Similarly, let $R$ be the right block and $g_r$ the leftmost good of the block. Then:
\begin{enumerate}
\item We have that $\min\{u(G)/2,u(L)\}\geq u(R\backslash\{g_r\})$ and $\min\{u(G)/2,u(R)\}\geq u(L\backslash\{g_l\})$.
\item The partition can be computed using $O(\log m)$ queries in the worst case.
\end{enumerate}
\end{lemma}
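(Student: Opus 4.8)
The plan is to analyze the partition through the signed difference $D_k := u(R_k) - u(L_k)$, where $L_k = \{g_1,\dots,g_k\}$ and $R_k = \{g_{k+1},\dots,g_m\}$ are the two blocks obtained by cutting after the $k$-th good. By additivity $u(L_k)$ is non-decreasing in $k$, so $D_k = u(G) - 2u(L_k)$ is non-increasing, with $D_0 = u(G)\ge 0$ and $D_m = -u(G)\le 0$; minimizing $|u(L)-u(R)|$ over contiguous partitions is exactly minimizing $|D_k|$. The statement is symmetric under exchanging the roles of $L$ and $R$ (reflecting the line), so for part~1 I would assume without loss of generality that the chosen partition has $u(L)\le u(R)$, i.e. $D := u(R)-u(L)\ge 0$. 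Under this assumption the second inequality is immediate: $u(L\setminus\{g_l\})\le u(L)\le u(R)$ and $u(L)\le u(G)/2$, so $\min\{u(G)/2,u(R)\}\ge u(L\setminus\{g_l\})$ trivially.

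The heart of part~1 is then the single claim that $u(g_r)\ge D$. To prove it I would compare the chosen partition against the one obtained by sliding the cut one step right, moving $g_r$ from $R$ into $L$; that partition has difference $|2u(g_r)-D|$, and minimality forces $|2u(g_r)-D|\ge D$, leaving only $u(g_r)\ge D$ or $u(g_r)\le 0$. The degenerate branch $u(g_r)=0$ is exactly where the tie-breaking rule enters: if $u(g_r)=0$ then $g_r$ is a value-$0$ good adjacent to the cut sitting in the strictly higher-value block $R$ (recall $u(L)<u(R)$ in the nontrivial case), which the rule forbids. Hence $u(g_r)\ge D$, and then $u(R\setminus\{g_r\})=u(R)-u(g_r)\le u(R)-D=u(L)$ and $u(R)-u(g_r)\le u(R)-D/2=u(G)/2$, giving $\min\{u(G)/2,u(L)\}\ge u(R\setminus\{g_r\})$. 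The case $D=0$, and the trivial case $u(G)=0$, follow directly since every block inequality then collapses.

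For part~2, I would spend one query on $u(G)$, after which each $D_k$ costs a single query $u(L_k)$ via $D_k=u(G)-2u(L_k)$. Since $u(L_k)$ is monotone, a binary search over $k\in\{0,\dots,m\}$ locates the crossover index $p$, the largest $k$ with $D_k\ge 0$, in $O(\log m)$ queries; the desired cut is then at $p$ or $p+1$, decided by comparing $|D_p|$ with $|D_{p+1}|$ using a constant number of further queries. The point I would stress is that this choice automatically realizes the tie-breaking rule: because $p$ is the \emph{largest} index with $D_p\ge 0$, the good $g_{p+1}$ straddling the cut satisfies $u(g_{p+1})=(D_p-D_{p+1})/2>0$, so whichever of $p,p+1$ is selected, the good just inside the higher-value block is strictly positive and all value-$0$ goods around the cut have already been pushed into the lower-value block, exactly as the rule prescribes.

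The main obstacle I anticipate is the careful treatment of value-$0$ goods, which threads through both parts: in part~1 they are the only way the minimality comparison can fail to yield $u(g_r)\ge D$, and the tie-breaking rule must be invoked at precisely that point; in part~2 a naive attempt to enforce tie-breaking by scanning the goods adjacent to the cut could cost a linear number of queries, so the argument must instead show that selecting the extreme crossover index $p$ already encodes the convention. Verifying that these two uses of the tie-breaking rule are mutually consistent, and that plateaus of $D_k$ caused by value-$0$ goods do not disrupt the monotonicity relied on by the binary search, is where I would concentrate the rigor.
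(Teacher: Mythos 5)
Your proposal is correct and follows essentially the same route as the paper's proof: part 1 rests on comparing the chosen partition with the one obtained by moving $g_r$ across the cut and invoking the zero-value tie-breaking rule to exclude $u(g_r)=0$ (the paper phrases this as a proof by contradiction, you as a direct bound $u(g_r)\ge D$), and part 2 is the same binary search for the sign-change index of $u(L_k)-u(G)/2$ with a constant number of extra queries to pick between the two candidate cuts. Your extra care about why the extreme crossover index automatically realizes the tie-breaking convention is a nice explicit touch, but not a different argument.
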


\begin{proof}
We prove the two parts in turn.
\begin{enumerate}
\item By symmetry it suffices to prove that $\min\{u(G)/2,u(L)\}\geq u(R\backslash\{g_r\})$. Suppose that $u(L)<u(R\backslash\{g_r\})$. We have $u(g_r)>0$ since otherwise we would have moved the cut point to the right. Then $u(L\cup\{g_r\})>u(L)$ and $u(R\backslash\{g_r\})>u(L)$; hence $(L\cup\{g_r\},R\backslash\{g_r\})$ is a more equal partition than $(L,R)$, a contradiction. If $u(R\backslash\{g_r\})>u(G)/2$ then we also have that $u(R\backslash\{g_r\})>u(L)$, and the same argument yields a contradiction.
\item  We use binary search to find the leftmost good $g$ such that $u(L_g\cup\{g\})\geq u(G)/2$. The left block of the partition will be either $L_g$ or $L_g\cup\{g\}$. Indeed, if the left block is smaller than $L_g$ then having $L_g$ as the left block yields a more equal partition, while if it is larger than $L_g\cup\{g\}$ then having $L_g\cup\{g\}$ as the left block yields a more equal partition. Moreover, with this choice of partition, any good of value 0 next to the cut point will already belong to the block of lower value. The total number of queries required for the binary search is $O(\log m)$.
\end{enumerate}
The proof is complete.
\end{proof}

Next, we present an algorithm that computes a contiguous EF1 allocation for three agents with identical utilities using a logarithmic number of queries. Not only will the contiguity condition be useful later in our algorithm for three agents with arbitrary utilities, but in certain applications it may also be desirable to produce a contiguous allocation. For example, if the goods are office space, it is conceivable that each research group wishes to have a consecutive block of offices in order to facilitate collaboration within the group. While contiguous fair allocations of indivisible goods have recently been studied \citep{BCEI+17,Suk17}, to the best of our knowledge even the \emph{existence} of a contiguous EF1 allocation for three agents with identical utilities has not been established before, let alone an algorithm that computes such an allocation using a small number of queries. Hence our result may be of interest even if one is not concerned with the number of queries made. In Section~\ref{sec:many}, the existence result is generalized to any number of agents with identical \emph{monotonic} utilities (Lemma~\ref{lem:manycontiguous}).\footnote{After we published an initial version of our paper, \citet{BCFI+19} independently proved this generalization. In addition, they showed the existence of a contiguous EF1 allocation for up to four agents with arbitrary monotonic utilities.}

To demonstrate that the problem of establishing the existence of a contiguous EF1 allocation in this setting is not straightforward, we present a very natural approach that, perhaps surprisingly, does not work. We first pretend that the goods are divisible and find the two cut points that would divide the goods into three parts of exactly equal value. For each cut point, if the cut point falls between two (now indivisible) goods, we keep it; otherwise we round it either to the left or to the right. One might be tempted to claim that at least one of the resulting allocations must be EF1. Indeed, Lemma~\ref{lem:cutandchoose} implies that an analogous approach works for two agents. However, an example given in Appendix~\ref{app:ex} shows that the approach does not work for three agents, no matter how we round the cut points.

\begin{framed}
\noindent
\textbf{Algorithm~2} (for three agents with \emph{identical} additive utilities) \\

\noindent
\emph{Step~1:} Assume that the goods lie on a line, and denote by $u$ the common utility function of the three agents. Let $g_1$ be the leftmost good such that $u(L_{g_1}\cup \{g_1\})>u(G)/3$, and let $g_2$ be the rightmost good such that $u(R_{g_2}\cup \{g_2\})>u(G)/3$. (Possibly $g_1=g_2$.) Assume without loss of generality that $u(L_{g_1})\geq u(R_{g_2})$; the algorithm proceeds analogously in the opposite case.  \\

\noindent
\emph{Step~2:} If $L_{g_1}\neq\emptyset$, let $g_3$ be the leftmost good such that $u(L_{g_3}\cup \{g_3\})\geq u(R_{g_2})$. Set $A=L_{g_3}\cup\{g_3\}$. Else, set $A=\emptyset$. In both cases, set $C=R_{g_2}$ and $B=G\backslash(A\cup C)$. \\

\noindent
\emph{Step~3:} If $u(C)\geq u(B\backslash\{g_2\})$, return the allocation $(A,B,C)$. Else, set $C'=R_{g_2}\cup\{g_2\}$. Partition the remaining goods into two contiguous blocks according to Lemma~\ref{lem:cutandchoose}; denote by $A'$ the left block and $B'$ the right block. Return the allocation $(A',B',C')$.
\end{framed}

The following lemma establishes the claimed properties of Algorithm~2.

\begin{lemma}
\label{lem:threeidenticalalgo}
Assume that the goods lie on a line. For three agents with identical additive utilities, Algorithm~2 computes a contiguous EF1 allocation. Moreover, the algorithm can be implemented to use $O(\log m)$ queries in the worst case.
\end{lemma}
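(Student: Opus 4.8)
The plan is to fix the left-to-right order once and for all and to reason in terms of cumulative values; write $P_k$ for the value of the $k$ leftmost goods, so that prefixes and suffixes are monotone and the positions chosen by the algorithm are well defined: $g_1$ is the first prefix to exceed $u(G)/3$, $g_2$ the first suffix (scanning from the right) to exceed $u(G)/3$, and $g_3$ the first prefix whose value reaches $u(R_{g_2})$. The first order of business is to pin down the geometry. Since $u(L_{g_1})\le u(G)/3$ and, by the WLOG assumption, $u(R_{g_2})\le u(L_{g_1})$, the good $g_3$ lies weakly to the left of $g_1$, so $A\subseteq L_{g_1}$ and in particular $u(A)\le u(G)/3$. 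Combined with $u(C)=u(R_{g_2})\le u(G)/3$, this forces $u(B)\ge u(G)/3$, i.e.\ $B$ is a largest bundle; it also makes $A,B,C$ genuinely contiguous, pairwise disjoint, and $g_2\in B$. Contiguity of the returned allocation in both branches is then immediate by inspection.

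For the first branch I would verify EF1 pairwise. Because $B$ is a largest bundle it envies no one. Agent $C$ does not envy $A$ because minimality of $g_3$ gives $u(A\setminus\{g_3\})<u(R_{g_2})=u(C)$, and $C$ does not envy $B$ because the branch condition gives $u(C)\ge u(B\setminus\{g_2\})$ with $g_2\in B$. Finally $A$ envies neither $C$, as $u(A)\ge u(C)$, nor $B$, by chaining $u(A)\ge u(C)\ge u(B\setminus\{g_2\})$. Each step is a one-line consequence of the definitions of $g_1,g_2,g_3$.

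The substance is in the second branch, where $g_2$ is moved into $C'$ and the remaining prefix $G'=G\setminus C'$ is re-split by Lemma~\ref{lem:cutandchoose} into $A'$ (left block, rightmost good $g_l$) and $B'$ (right block, leftmost good $g_r$). EF1 between $A'$ and $B'$ is immediate from part~1 of the lemma. That $C'$ envies neither block follows because $u(G')=u(G)-u(C')<2u(G)/3$ while $u(C')>u(G)/3$, so $u(A'\setminus\{g_l\})\le u(G')/2<u(C')$ and likewise for $B'$. The crux is the opposite direction, that neither $A'$ nor $B'$ envies the now-large bundle $C'$ up to one good, and this is the step I expect to be the main obstacle, since it is where the branch condition and the optimality of the cut-and-choose partition must be combined. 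I would argue it in two moves. First, $g_2\in C'$ and the maximum-value good of $C'$ is worth at least $u(g_2)$, so removing it leaves value at most $u(R_{g_2})$; hence it suffices to show $\min\{u(A'),u(B')\}\ge u(R_{g_2})$. Second, for a fixed total the smaller block of a contiguous bipartition is a decreasing function of the difference between the two blocks, so the minimum-difference partition delivered by the lemma has its smaller block at least as large as that of the particular partition $(A,\,B\setminus\{g_2\})$ of $G'$; that smaller block equals $\min\{u(A),u(B\setminus\{g_2\})\}$, which is at least $u(R_{g_2})$ because $u(A)\ge u(R_{g_2})$ by the choice of $g_3$ and $u(B\setminus\{g_2\})>u(R_{g_2})$ by the branch condition.

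For the query bound I would observe that each of $g_1$, $g_2$, $g_3$ is located by binary search over monotone cumulative values using $O(\log m)$ queries, the branch test and the comparison $u(L_{g_1})\ge u(R_{g_2})$ cost $O(1)$ further queries, and the re-split in the second branch is $O(\log m)$ by part~2 of Lemma~\ref{lem:cutandchoose}; summing gives $O(\log m)$ queries overall.
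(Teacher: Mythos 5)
Your proof is correct and follows essentially the same route as the paper's: the same two-case split on the test $u(C)\ge u(B\setminus\{g_2\})$, the same chain of inequalities $u(B)\ge u(G)/3\ge u(A)\ge u(C)$ in the first branch, and the same use of Lemma~\ref{lem:cutandchoose} in the second. The only difference is presentational: you spell out the comparison of the minimum-difference partition of $A\cup(B\setminus\{g_2\})$ against the candidate partition $(A,\,B\setminus\{g_2\})$ itself, a step the paper's phrase ``both blocks are also worth at least $u(C)$'' leaves implicit.
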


\begin{proof}
We first show that the algorithm computes an EF1 allocation. We assume like at the end of Step~1 that $u(L_{g_1})\geq u(R_{g_2})$ (the proof for the opposite case is analogous), and consider two cases.

\begin{itemize}
\item \emph{Case~1}: At the beginning of Step~3, $u(C)\geq u(B\backslash\{g_2\})$ (so we return the allocation $(A,B,C)$ from Step~2). We have $u(A)\geq u(C)\geq u(B\backslash\{g_2\})$, so $a_1$ and $a_3$ do not envy $a_2$ up to one good. Since $u(B)\geq u(G)/3\geq u(A)\geq u(C)$, $a_2$ does not envy $a_1$ or $a_3$, and $a_1$ does not envy $a_3$. Moreover, if $L_{g_1}\neq\emptyset$, then by definition of $g_3$ we have $u(A\backslash\{g_3\})\leq u(C)$, which means that $a_3$ does not envy $a_1$ up to one good. If $L_{g_1}=\emptyset$ then $A=C=\emptyset$, and again $a_3$ does not envy $a_1$ up to one good.

\item \emph{Case~2}: At the beginning of Step~3, $u(C)< u(B\backslash\{g_2\})$. We also have $u(C)\leq u(A)$. Since we partition $A\cup (B\backslash\{g_2\})$ into two blocks according to Lemma~\ref{lem:cutandchoose}, both blocks are also worth at least $u(C)=u(C'\backslash\{g_2\})$, meaning that $a_1$ and $a_2$ do not envy $a_3$ up to one good. We claim that $a_2$ and $a_3$ also do not envy $a_1$ up to one good; the claim for $a_1$ and $a_3$ towards $a_2$ can be shown similarly. Let $g$ be the rightmost good in $A'$. (If $A'=\emptyset$, the claim holds trivially.) It suffices to show that $u(B')\geq u(A'\backslash\{g\})$ and $u(C')\geq u(A'\backslash\{g\})$. By Lemma~\ref{lem:cutandchoose}, we have $u(B')\geq u(A'\backslash\{g\})$ and 
\begin{align*}
u(A'\backslash\{g\})&\leq u(A'\cup B')/2\\
&=\frac{u(G)-u(C')}{2} \\
&\leq u(G)/3\\
&\leq u(C'),
\end{align*}
Hence the allocation is EF1.
\end{itemize}

Next, we show that the algorithm can be implemented to use $O(\log m)$ queries. By monotonicity, both finding $g_1$ and $g_2$ in Step~1 and finding $g_3$ in Step~2 can be done by binary search using $O(\log m)$ queries. By Lemma~\ref{lem:cutandchoose}, the partition in Step~3 can be found using $O(\log m)$ queries. The remaining operations of the algorithm only require a constant number of queries. Hence the total number of queries is $O(\log m)$.
\end{proof}

A bonus of Algorithm~2 is that in the allocation produced by the algorithm, if some agent envies another agent, then the envy can be eliminated by removing not just some arbitrary good from the latter agent's bundle, but one of the goods at the end of the latter agent's block. In fact, we can also choose this good to be a good next to a cut point; this nails down a unique good for the agents getting the left or right block. The property can be deduced from the proof of Lemma~\ref{lem:threeidenticalalgo}.

Next, we leverage Algorithm~2 to show that for three agents with identical additive utilities, if we designate three goods in advance, it is possible to compute an EF1 allocation such that all three designated goods belong to different bundles. 

\begin{lemma}
\label{LEM:SEPARATE}
Let $g_1$, $g_2$, $g_3$ be three distinct goods. For three agents with identical additive utilities, there exists a deterministic algorithm that computes an EF1 allocation such that the three goods belong to three different bundles using $O(\log m)$ queries in the worst case.
\end{lemma}

Before we establish the lemma, we show an interesting property of Algorithm~2.

\begin{lemma}
\label{lem:middle}
Assume that the goods lie on a line, and let $g$ be a good such that $u(L_g)\geq u(G)/3$ and $u(R_g)\geq u(G)/3$. Then $g$ belongs to the middle bundle in the allocation returned by Algorithm~2.
\end{lemma}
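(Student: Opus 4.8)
The plan is to track the position of $g$ on the line relative to the cut goods $g_1,g_2$ (and $g_3$) of Algorithm~2, and then determine which bundle $g$ lands in for each of the two possible outputs, $(A,B,C)$ and $(A',B',C')$, whose middle bundles are $B$ and $B'$ respectively. Since the hypotheses $u(L_g)\ge u(G)/3$ and $u(R_g)\ge u(G)/3$, as well as the notion of ``middle bundle'', are all symmetric under reversing the line, I may assume the algorithm does not flip the line in Step~1, i.e., that $u(L_{g_1})\ge u(R_{g_2})$.

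First I would record the defining inequalities of the cut goods: because $g_1$ is the leftmost good whose prefix exceeds $u(G)/3$, we have $u(L_{g_1})\le u(G)/3 < u(L_{g_1}\cup\{g_1\})$, and symmetrically $u(R_{g_2})\le u(G)/3 < u(R_{g_2}\cup\{g_2\})$. Next I would establish the two positional facts that drive the argument: (i) $g_1$ lies weakly to the left of $g_2$, and (ii) $g_3$ lies weakly to the left of $g_1$. Fact~(ii) is immediate, since $u(L_{g_1}\cup\{g_1\}) > u(G)/3 \ge u(R_{g_2})$ shows that $g_1$ itself satisfies the defining condition of $g_3$, so the leftmost such good is no further right than $g_1$. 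Fact~(i) follows from a short counting argument: if $g_1$ were strictly to the right of $g_2$, splitting $G$ into the prefix $P$ up to $g_2$, the goods $M$ strictly between $g_2$ and $g_1$, and the suffix $S$ from $g_1$ on, the inequalities $u(L_{g_1})=u(P)+u(M)\le u(G)/3$ and $u(R_{g_2})=u(M)+u(S)\le u(G)/3$ sum to $u(G)+u(M)\le 2u(G)/3$, forcing $u(M)\le -u(G)/3$, which is impossible unless $u(G)=0$. Using monotonicity of prefix and suffix sums together with $u(L_g)\ge u(G)/3\ge u(L_{g_1})$ and $u(R_g)\ge u(G)/3\ge u(R_{g_2})$, I then place $g$ (weakly) between $g_1$ and $g_2$.

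With this in hand I would treat the two outputs separately. For the output $(A,B,C)$, the right bundle is $C=R_{g_2}$ and the left bundle is the prefix $A=L_{g_3}\cup\{g_3\}$; since $g$ is weakly left of $g_2$ it avoids $C$, and since $g$ is weakly right of $g_1$, which is itself weakly right of $g_3$, it avoids $A$, so $g\in B$. For the output $(A',B',C')$, the right bundle is $C'=R_{g_2}\cup\{g_2\}$ and $A',B'$ is the difference-minimizing split of $L_{g_2}$ from Lemma~\ref{lem:cutandchoose}; here $g$ lies in $L_{g_2}=A'\cup B'$, and to show $g\in B'$ I would argue it cannot lie in the left block $A'$. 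By Lemma~\ref{lem:cutandchoose} the left block minus its rightmost good has value at most $u(L_{g_2})/2$, and $u(L_{g_2})=u(G)-u(R_{g_2}\cup\{g_2\})<2u(G)/3$; hence any prefix ending strictly inside $A'$ has value below $u(G)/3\le u(L_g)$, a contradiction.

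The main obstacle is the handling of the boundary cases, where the weak inequalities in the hypothesis allow $g$ to sit exactly at a cut point (for instance $g=g_2$, or $g$ equal to the rightmost good of the left block $A'$) or where zero-value goods straddle a cut; a good satisfying the hypothesis with equality can then drift into an extreme bundle. Resolving these knife-edge situations requires leaning on the strict inequalities built into the definitions of $g_1,g_2,g_3$ and on the convention from Lemma~\ref{lem:cutandchoose} that zero-value goods next to the cut point are pushed into the lower-value block. Upgrading each ``weakly left/right'' statement above into the strict containment needed to pin $g$ firmly into the middle bundle is where the real care lies; the remainder of the proof is bookkeeping with prefix and suffix sums.
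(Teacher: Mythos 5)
Your overall route is the same as the paper's: show $g\notin A$ and $g\notin C$ by locating $g$ relative to the cut goods, and in the $(A',B',C')$ branch use the balancedness of the Lemma~\ref{lem:cutandchoose} cut of $L_{g_2}$ to force $g$ into $B'$. Your argument for that last step --- any prefix contained in $A'\setminus\{g_l\}$ has value at most $u(A'\cup B')/2<u(G)/3\le u(L_g)$ --- is a clean variant of the paper's (which instead compares the two sides of the line at $g$ directly) and correctly absorbs the sub-case where $g$ is the rightmost good of $A'$.

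The boundary analysis you defer, however, is not mere bookkeeping: on the $C$ side it cannot be completed, because the strict positional claim you need is false. If $g$ lies strictly to the right of $g_2$, maximality of $g_2$ gives only $u(R_g\cup\{g\})\le u(G)/3$, which is consistent with the hypothesis $u(R_g)\ge u(G)/3$ whenever $u(g)=0$. Concretely, take four goods $h_1,\dots,h_4$ with values $1,1,0,1$: Algorithm~2 sets $g_1=g_2=h_2$, the orientation condition $u(L_{g_1})\ge u(R_{g_2})$ already holds ($1\ge 1$), it computes $A=\{h_1\}$, $B=\{h_2\}$, $C=\{h_3,h_4\}$, and returns $(A,B,C)$ since $u(C)=1\ge 0=u(B\setminus\{g_2\})$; yet $h_3$ satisfies $u(L_{h_3})=2\ge 1$ and $u(R_{h_3})=1\ge 1$ and lands in $C$. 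So the lemma as stated fails for zero-value goods at the knife edge; the paper's own proof has the same hole in its ``similarly, $g\notin C$'' step, and both are repaired by adding the hypothesis $u(g)>0$, which makes the positional claims strict. Two smaller points. First, the $A$ side needs no strictness at all: if $g\in A=L_{g_3}\cup\{g_3\}$ then $L_g$ is a prefix not containing $g_3$, so $u(L_g)<u(R_{g_2})\le u(G)/3$ by minimality of $g_3$, a contradiction --- this is the argument to use, since your route through $g_1$ can also fail at a tie (with the conclusion $g\notin A$ still holding). Second, in the $(A',B',C')$ branch you assume $g\in L_{g_2}$ without ruling out $g=g_2$; that exclusion needs the branch condition $u(C)<u(B\setminus\{g_2\})$ together with $u(A)\ge u(R_{g_2})$.
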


\begin{proof}
By definition of $g_1$ in Algorithm~2, we have that $g$ is either $g_1$ itself or to the right of $g_1$, so $g\not\in A$. Similarly, $g\not\in C$. If the allocation $(A,B,C)$ is returned, $g$ belongs to the middle bundle $B$. Otherwise, we must have $g\neq g_2$, and the algorithm partitions $A'\cup (B'\backslash\{g_2\})$ into two blocks according to Lemma~\ref{lem:cutandchoose}. The subset of $A'\cup (B'\backslash\{g_2\})$ to the left of $g$ has value at least $u(G)/3$, while the subset to the right of $g$ together with $g$ has value at most $u(G)/3$. It follows that in the resulting partition, $g$ must belong to $B'$, which is the middle bundle in the allocation $(A',B',C')$.
\end{proof}

We now proceed to prove Lemma~\ref{LEM:SEPARATE}.

\begin{proof}[Proof of Lemma~\ref{LEM:SEPARATE}]
Denote by $u$ the common utility function of the agents, and assume without loss of generality that $u(g_1)\geq u(g_2)\geq u(g_3)$. We consider three cases.

\begin{itemize}
\item \emph{Case~1}: There exists a good $g\in\{g_1,g_2,g_3\}$ such that $u(g)\geq u(G)/3$. Then it must be the case that $u(g_1)\geq u(G)/3$. Set $A=\{g_1\}$, arrange the remaining goods in a line with $g_2$ and $g_3$ at the two ends, and partition the goods into two contiguous blocks according to Lemma~\ref{lem:cutandchoose}. Set $B$ and $C$ to be the two blocks and return the allocation $(A,B,C)$. Clearly, all three goods belong to different bundles. By Lemma~\ref{lem:cutandchoose}, $a_2$ and $a_3$ do not envy each other up to one good. As $a_1$ receives only one good, $a_2$ and $a_3$ do not envy her up to one good. Since $u(A)\geq u(G)/3$, we have $u(B\cup C)\leq 2u(G)/3$. By Lemma~\ref{lem:cutandchoose} again, there is a good in $a_2$'s bundle such that if we remove it, then the remaining value of $a_2$ is at most $u(B\cup C)/2\leq u(G)/3$. This implies that $a_1$ does not envy $a_2$ up to one good. A similar argument holds for $a_1$ towards $a_3$. 

\item \emph{Case~2}: There exists a good $g\not\in\{g_1,g_2,g_3\}$ such that $u(g)\geq u(G)/3$. Set $A=\{g_3,g\}$, arrange the remaining goods in a line with $g_1$ and $g_2$ at the two ends, and partition the goods into two contiguous blocks according to Lemma~\ref{lem:cutandchoose}. Set $B$ and $C$ to be the two blocks and return the allocation $(A,B,C)$. Clearly, all three goods belong to different bundles. By Lemma~\ref{lem:cutandchoose}, $a_2$ and $a_3$ do not envy each other up to one good. Moreover, since $u(g_3)\leq u(g_1),u(g_2)$, both agents do not envy $a_1$ when $g$ is removed from $a_1$'s bundle. A similar argument as in Case~1 shows that $a_1$ does not envy $a_2$ or $a_3$ up to one good.

\item \emph{Case~3}:  $u(g)< u(G)/3$ for every good $g$. Arrange the goods in a line starting with $g_1$ and $g_2$ at the left and right ends, respectively. Then, keeping $g_3$ aside, add one good at a time to the left end (to the right of $g_1$) until the total value of the goods at the left end exceeds $u(G)/3$. Since $\max(u(g_1),u(g_2),u(g_3))<u(G)/3$, this occurs when we add some good $g\not\in\{g_1,g_2,g_3\}$. Add $g_3$ to the right of $g$. If $u(L_{g_3})\geq u(G)/3$ and $u(L_{g_3}\cup\{g_3\})\leq 2u(G)/3$, add the remaining goods arbitrarily and run Algorithm~2; Lemma~\ref{lem:middle} implies that $g_3$ belongs to the middle bundle of the resulting allocation. Else, $u(\{g_3,g\})\geq u(G)/3$. Set $A=\{g_3,g\}$ and remove these two goods from the line. Add the remaining goods arbitrarily to the line, and partition the goods into two contiguous blocks according to Lemma~\ref{lem:cutandchoose}. Set $B$ and $C$ to be the two blocks and return the allocation $(A,B,C)$. A similar argument as in Case~2 shows that the resulting allocation is EF1.
\end{itemize}

Algorithm~2 uses $O(\log m$) queries, and by Lemma~\ref{lem:cutandchoose}, partitioning into two contiguous blocks according to the lemma also uses $O(\log m)$ queries. In Case~3, we can find $g$ by adding all goods except $g_3$ to the line and using binary search; this takes $O(\log m)$ queries. To determine whether there exists a good $g$ with $u(g)>u(G)/3$, arrange the goods in a line, and use binary search to find the leftmost good $g_l$ such that $u(L_{g_l}\cup\{g_l\})>u(G)/3$ and the rightmost good $g_r$ such that $u(R_{g_r}\cup\{g_r\})>u(G)/3$. Such a good $g$ must be one of $g_l$ and $g_r$. This also takes $O(\log m)$ queries. The remaining operations of the algorithm only require a constant number of queries. Hence the total number of queries is $O(\log m)$.
\end{proof}

Note that for two agents, an analogous statement holds even when the agents have arbitrary monotonic utilities, since we can place the two designated goods at different ends of a line and apply Algorithm~1.

\subsection{Arbitrary Additive Utilities}

With Algorithm~2 and Lemma~\ref{LEM:SEPARATE} in hand, we are now ready to present an algorithm that computes an EF1 allocation for three agents with arbitrary additive utilities using a logarithmic number of queries. The algorithm is based on the Selfridge-Conway procedure for computing an envy-free allocation of \emph{divisible} goods, often modeled as a cake, for three agents. At a high level, the Selfridge-Conway procedure operates by letting the first agent divide the cake into three equal pieces and letting the second agent trim her favorite piece so that it is equal to her second favorite piece. Then, the procedure allocates one ``main'' piece to each agent, with the third agent choosing first, and allocates the leftover cake in a carefully designed way. 

Like the Selfridge-Conway procedure, our algorithm starts by letting the first agent divide the goods into three almost equal bundles using Algorithm~2, so that no matter how the bundles are allocated, the agent finds the allocation to be EF1. It then proceeds by letting the second agent trim her favorite bundle so that her value for the bundle goes just below that for her second favorite bundle. However, a difficulty in our indivisible goods setting is that at this point, the second agent might find the remaining part of her favorite bundle to be worth less than her second favorite bundle \emph{even} if we remove any good from her second favorite bundle. This is possible, for instance, if the last good that she removes from her favorite bundle is of high value, and her second favorite bundle only consists of goods of low value. We will need to fix this problem by finding ``large'' goods in the leftover bundle that help us recover the EF1 guarantee; this is done in Step~3 of the algorithm. While identifying these large goods can be done easily if we can make queries for the value of every good in the leftover bundle, we would not achieve the logarithmic bound if the leftover piece contained more than a logarithmic number of goods.

\begin{framed}
\noindent
\textbf{Algorithm~3} (for three agents with additive utilities) 
\vspace{3mm}

\noindent
\emph{Step~1:} Compute an EF1 allocation $(A,B,C)$ for three identical agents with utility $u_1$. 
\begin{itemize}
\item If $a_2$ and $a_3$ have different favorite bundles among $A,B,C$, give them their favorite bundles, and give the remaining bundle to $a_1$.
\item Else, assume without loss of generality that $u_2(A)> u_2(B)\geq u_2(C)$ and $u_3(A)>\max\{u_3(B),u_3(C)\}$; the algorithm proceeds analogously in the other symmetric cases. Proceed to Step~2.
\end{itemize} 
\vspace{3mm}

\noindent
\emph{Step~2:} Let $a_2$ divide $A$ into $A'$ and $T$ such that $u_2(A')\leq u_2(B)$ and there exists $g_t\in T$ with $u_2(A'\cup\{g_t\})>u_2(B)$. 
\begin{itemize}
\item If $u_3(A')\geq \max\{u_3(B),u_3(C)\}$, give $A'$ to $a_3$, $B$ to $a_2$, and $C$ to $a_1$. Compute an EF1 allocation $(T_1,T_2,T_3)$ of the goods in $T$ for three identical agents with utility $u_2$. Let $a_3$ choose her favorite bundle followed by $a_1$, and let $a_2$ take the remaining bundle.
\item Else, we have $u_3(A')<\max\{u_3(B),u_3(C)\}$. Proceed to Step~3.
\end{itemize}
\vspace{3mm}

\noindent
\emph{Step~3:} Define $d=u_2(B)-u_2(A')\geq 0$. Call a good $g$ \emph{large} if $g\in T$ and $u_2(g)\geq d$, where we update $A'$, $T$, and $d$ during the course of the step. (Intuitively, a large good, when added to $A'$, tips $a_2$'s preference between $A'$ and $B$.) Let $a_2$ find up to three large goods. The first large good is $g_t\in T$ such that $u_2(A'\cup\{g_t\})>u_2(B)$. To find further large goods, let $E\subseteq T$ be such that $u_2(E)\geq d$, $u_2(E\backslash\{g_e\})<d$ for some $g_e\in E$, and $E$ does not contain any identified large good. 
\begin{itemize}
\item If such a set $E$ (and good $g_e$) exists, remove the goods in $E\backslash\{g_e\}$ from $T$ and add them to $A'$, and decrease $d$ by $u_2(E\backslash\{g_e\})$. (For the first large good, take $E=\{g_t\}$.) Then $g_e$ is a new large good.
\begin{itemize}
    \item If $u_3(A')\geq \max\{u_3(B),u_3(C)\}$ with the updated set $A'$, allocate the goods according to Step~2.
    \item Else, it still holds that $u_3(A')<\max\{u_3(B),u_3(C)\}$.
\end{itemize}
\item On the other hand, if no such set $E$ exists, remove all goods except the (up to two) identified large goods from $T$ and add them to $A'$, and decrease $d$ by $a_2$'s value for these goods.
\end{itemize}
\vspace{3mm}

\noindent
\emph{Step~4:} Let $S_3$ be $a_3$'s preferred bundle between $B$ and $C$, and let $S_1$ be the other bundle. Give $S_2:=A'$ to $a_2$, $S_3$ to $a_3$, and $S_1$ to $a_1$. \\

\noindent
\emph{Step~5:} Compute an EF1 allocation $(T_1,T_2,T_3)$ of the goods in $T$ for three identical agents with utility $u_3$ in such a way that all identified large goods belong to different bundles. \\

\noindent
\emph{Step~6:} Check whether there is an identified large good in each of $T_1$, $T_2$, and $T_3$.
\begin{itemize}
\item If there is an identified large good in each of $T_1$, $T_2$, and $T_3$, give $a_2$ her favorite bundle $T_i$ if we were to remove the identified large good from each of these bundles. Let $a_1$ choose her preferred bundle from the remaining two bundles (without removing the large goods), and give $a_3$ the remaining bundle.
\item Else, give the first identified large good to $a_2$ and the second identified large good (if exists) to $a_1$.
\end{itemize}
 
\end{framed}

The following theorem, which we view as our main result, establishes the claimed properties of Algorithm~3 by leveraging the machinery developed above.

\begin{theorem}
\label{thm:threeadditivealgo}
For three agents with additive utilities, Algorithm~3 computes an EF1 allocation. Moreover, the algorithm can be implemented to use $O(\log m)$ queries in the worst case.
\end{theorem}

\begin{proof}
We first show that the algorithm computes an EF1 allocation. We consider three cases. In Cases~2 and 3, assume without loss of generality that $T_1$ is the first bundle picked from among $T_1,T_2,T_3$, followed by $T_2$ and then $T_3$.

\begin{itemize}
\item \emph{Case~1}: The algorithm terminates in Step~1. Both $a_2$ and $a_3$ get their favorite bundles and therefore do not envy any other agent, while $a_1$ does not envy any other agent up to one good no matter which bundle she gets. 

\item \emph{Case~2}: The algorithm terminates in Step~2 or 3. This means that $u_3(A')\geq \max\{u_3(B),u_3(C)\}$ (either before or after finding large goods). In this case, the allocation is $(C\cup T_2,B\cup T_3,A'\cup T_1)$. Since $a_3$ gets her favorite bundles $A'$ and $T_1$, she does not envy any other agent. Next, $a_2$ gets her favorite bundle $B$, and the allocation $(T_1,T_2,T_3)$ of $T$ is computed according to her utility, so she does not envy any other agent up to one good. Furthermore, note that $a_3$'s bundle $A'\cup T_1$ is a subset of $A$, and $a_1$ would not envy $a_3$ up to one good even if $a_3$ were to get the whole bundle $A$. Also $a_1$ prefers $T_2$ to $T_3$ and the allocation $(A,B,C)$ is computed according to her utility, so she does not envy $a_2$ up to one good.

\item \emph{Case~3}: The algorithm terminates in Step~6. Denote by $T_i'$ the bundle among $T_1,T_2,T_3$ allocated to agent $a_i$. In this case, the allocation is $(S_1\cup T_1', S_2\cup T_2', S_3\cup T_3')$. Note that any identified large good always remains large. Since $S_2\cup T_2'\subseteq A$ and the allocation $(A,B,C)$ is computed according to $a_1$'s utility, $a_1$ does not envy $a_2$ up to one good. Since $a_1$ prefers $T_1'$ to $T_3'$, she also does not envy $a_3$ up to one good. Next, $a_3$ prefers $S_3$ to both $S_1$ and $S_2$, and the allocation $(T_1,T_2,T_3)$ of $T$ is computed according to her utility, so she does not envy any other agent up to one good. If there are fewer than three identified large goods, then $T$ consists of at most two (large) goods. Since $a_2$ prefers $S_2\cup T_2'$ to $B$ and $C$, and both $T_1'$ and $T_3'$ contain at most one good, $a_2$ does not envy any other agent up to one good. Else, each $T_i'$ contains an identified large good; let $g'$ be the large good in $T_2'$. We have $u_2(S_2\cup\{g'\})\geq u_2(B)\geq u_2(C)$. Moreover, $a_2$ chooses her favorite bundle from $T_1,T_2,T_3$ if we were to remove the identified large good from each bundle. Therefore she does not envy any other agent up to one good. Hence the allocation is EF1.
\end{itemize}

We now show that the algorithm can be implemented to use $O(\log m)$ queries. Step~1 can be done using Algorithm~2 with $O(\log m)$ queries. Step~2 can be done with $O(\log m)$ queries by arranging the goods in $A$ in a line and using binary search to find the leftmost good $g_t$ such that $u_2(L_{g_t}\cup\{g_t\})>u_2(B)$, and by using Algorithm~2. Finding a large good in Step~3 can be done similarly using binary search. Step~5 can be done using Lemma~\ref{LEM:SEPARATE} with $O(\log m)$ queries, and Steps~4 and 6 can be done using a constant number of queries. Hence the total number of queries is $O(\log m)$.
\end{proof}

\subsection{Arbitrary Monotonic Utilities}

What happens if the agents have arbitrary monotonic utilities? Such utilities cannot be handled by the Selfridge-Conway procedure, which is designed for the cake cutting setting where additivity is assumed. Recently, \citet{BCFI+19} presented an algorithm that produces a contiguous EF1 allocation for three agents with monotonic utilities. We show that by adapting their algorithm, which is in turn based on Stromquist's moving-knife cake cutting protocol which produces a contiguous envy-free allocation for three agents \citep{Strom80}, it is possible to compute an EF1 allocation using a polylogarithmic number of queries. The details can be found in Appendix~\ref{sec:three-app}.

\begin{theorem}
\label{thm:threemonotonic}
For three agents with arbitrary monotonic utilities, there exists an algorithm that computes an EF1 allocation using $O(\log^2 m)$ queries in the worst case.
\end{theorem}

\section{Any Number of Agents}
\label{sec:many}

In this section, we consider the general setting where there can be any number of agents. We show that even at this level of generality and with arbitrary monotonic utilities, we still only need a number of queries that is linear in both the number of agents and the number of goods in order to compute an EF1 allocation. Furthermore, we identify certain subclasses of utilities for which we can use fewer queries. We then complement our positive results by showing a lower bound on the number of queries required in this general setting.

\subsection{Upper Bounds}

Our starting point is the \emph{envy cycle elimination algorithm} of \citet{LMMS04}, which computes an EF1 allocation for agents with arbitrary monotonic utilities. The algorithm works by allocating one good at a time in arbitrary order. It also maintains an \emph{envy graph}, which has the agents as its vertices, and a directed edge from $a_i$ to $a_j$ if $a_i$ envies $a_j$ with respect to the current (partial) allocation. At each step, the next good is allocated to an agent with no incoming edge, and any cycle that arises as a result is eliminated by giving $a_j$'s bundle to $a_i$ for any edge from $a_i$ to $a_j$ in the cycle. This allows the algorithm to maintain the invariant that the envy graph has no cycles and therefore has an agent with no incoming edge before each allocation of a good. The envy cycle elimination algorithm runs in time $O(n^3m)$ in the worst case. We refer to the paper of \citet{LMMS04} for the proof of correctness and detailed analysis of this algorithm.

Our main positive result for this section is the observation that the envy cycle elimination algorithm can be implemented using a relatively modest number of (value) queries. 

\begin{theorem}
\label{thm:manyupper-envygraph}
For any number of agents with arbitrary monotonic utilities, the envy cycle elimination algorithm can be implemented to compute an EF1 allocation using:
\begin{enumerate}
\item $O(nm)$ queries in the worst case.

\item $O(n^3k\log m)$ queries in the worst case,  if the utility function of each agent takes at most $k$ (possibly unknown) values across all subsets of goods.
\end{enumerate} 
\end{theorem}

\begin{proof}
We prove the two parts in turn.
\begin{enumerate}
\item Note that in the envy cycle elimination algorithm~\citep{LMMS04}, it suffices to query the value of each agent for the $n$ bundles in each partial allocation in order to construct the envy graph. Since there are $m$ partial allocations, this takes $O(nm)$ queries. The cycle elimination step does not require additional queries because the bundles remain the same and the algorithm already knows the value of every agent for every bundle.

\item Fix an ordering of the goods to be allocated, and assume that we allocate them from left to right. Let $a_i$ be an agent with no incoming edge in the envy graph corresponding to the current (partial) allocation. Let $g$ be the leftmost unallocated good such that if we allocate all goods up to and including $g$ to $a_i$, then the value of some agent for $a_i$'s bundle increases. We then allocate all of these goods to $a_i$ at once. This is a correct implementation of the algorithm because before $g$ is allocated, the value of any agent for any bundle in the partial allocation (and thus also the envy graph) remains unchanged.

By monotonicity, we can find the good $g$ with $O(n\log m)$ queries using binary search. Since there are $n$ bundles and the value of each agent for each bundle can change up to $k-1$ times, the number of value changes is at most $n^2(k-1)$. Hence the total number of queries is $O(n^3k\log m)$.
\end{enumerate}
This completes the proof.
\end{proof}

Theorem~\ref{thm:manyupper-envygraph} illustrates a sharp contrast between EF1 and the stronger fairness notions of envy-freeness and EFX. For the latter two notions, computing a fair allocation requires an exponential number of queries in the worst case, even in the most restricted setting of two agents with identical utilities. On the other hand, for EF1 we can get away with only $O(nm)$ queries even in the most general setting of any number of agents with arbitrary monotonic utilities. Moreover, if $n$ and $k$ are small compared to $m$, the bound of Item (2) of the theorem can be better than that of Item (1). In particular, if $n$ and $k$ are constant, the implementation only requires $O(\log m)$ queries. The case $k=2$ corresponds to the setting where each agent either approves or disapproves each subset of goods.\footnote{This is not to be confused with what we call binary utilities in this paper, for which $k$ can be as large as $m$.} A small value of $k$ may occur in settings where the mechanism designer gives a predefined set of preferences that the agents can express on each subset of goods, e.g., `very interested', `somewhat interested', and `not interested'.

The next class of utilities that we consider is the class where a subset of larger size is always weakly preferred to a subset of smaller size. In other words, getting a larger bundle cannot make an agent worse off. This class of utilities applies to settings where all of the goods are roughly equally valuable and there are only minor differences within each agent's utilities for the goods. While this is a rather restricted class of utilities, the utility function used by \citet{PlautRo20} to show that computing an EFX allocation takes an exponential number of queries in the worst case belongs to this class. The following result says that if we relax the fairness notion to EF1, then it is possible to compute a fair allocation using a much smaller number of queries that does not even depend on the number of goods and is only quadratic in the number of agents.

\begin{theorem}
\label{thm:manyupper-morebetter}
For any number of agents with monotonic utilities with the property that $u_i(G_1)\leq u_i(G_2)$ for all $i$ and all $G_1,G_2\subseteq G$ with $|G_1|<|G_2|$, there exists a deterministic algorithm that computes an EF1 allocation using $O(n^2)$ queries in the worst case.
\end{theorem}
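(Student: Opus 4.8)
The plan is to run the envy cycle elimination algorithm of \citet{LMMS04}, but to sidestep its dependence on $m$ by using the size-monotonicity hypothesis to supply a ``free'' partially-allocated starting point. Write $q=\lfloor m/n\rfloor$ and $r=m-nq<n$. First I would split the goods into $n$ \emph{cores} $C_1,\dots,C_n$ of size $q$ together with $r$ leftover \emph{extra} goods, and hand one core to each agent arbitrarily. The crucial observation is that this equal-size allocation is already EF1: for any agents $a_i,a_j$ and any good $g\in C_j$, the bundle $C_j\setminus\{g\}$ has size $q-1<q=|C_i|$, so $u_i(C_j\setminus\{g\})\le u_i(C_i)$ by the hypothesis that $u_i(G_1)\le u_i(G_2)$ whenever $|G_1|<|G_2|$. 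This is the only place the hypothesis is used; it lets us skip the $nq$ single-good allocation steps that the vanilla algorithm would otherwise perform and that would already cost $O(nm)$ queries by Theorem~\ref{thm:manyupper-envygraph}.

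From this starting allocation I would run envy cycle elimination on the $r$ extra goods only. Concretely: (i) query $u_i(C_j)$ for all $i,j$ (this is $n^2$ queries) to build the envy graph, and remove any cycles by the usual bundle-rotation step so the graph becomes acyclic; this keeps every bundle of size $q$, hence preserves EF1, and needs no further queries since all relevant values are already known. (ii) Process the extra goods one at a time: give the next extra good to a source of the current (acyclic) envy graph, re-query every agent's value for that single newly enlarged bundle ($n$ queries), and then rotate away any cycles that appear—again free of queries, since the only bundle whose value changed has just been queried. Because size-monotone valuations are in particular monotone, the guarantees of \citet{LMMS04}, that adding a good to an unenvied agent and then eliminating cycles preserves both EF1 and acyclicity, apply verbatim, so the final allocation is EF1.

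The query count is then $n^2$ for the initial core matrix plus $n$ per extra good, i.e.\ $n^2+rn=O(n^2)$ since $r<n$. The step I expect to require the most care is justifying the warm start: the equal-size allocation is a legitimate EF1 allocation but its envy graph need not be acyclic, so I must first eliminate its cycles (to guarantee that a source exists before the first extra good is placed) and then argue that this, together with each subsequent augmentation-plus-elimination, maintains the EF1 invariant even as the bundle sizes become unequal. Once this is established, the remainder is just bookkeeping of the queries. Minor edge cases, such as $q=0$ when $m<n$, are handled identically: the initial (empty) cores require no queries and the $m<n$ extras cost $mn<n^2$ queries, so the bound still holds.
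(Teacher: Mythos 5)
Your proof is correct, but it takes a genuinely different route from the paper's. The paper's proof also starts from the equal-size decomposition into $n$ cores of size $q=\lfloor m/n\rfloor$ plus $r<n$ leftovers, but it then finishes in one shot with a picking sequence: the first $n-r$ agents each choose their favorite remaining core, and the last $r$ agents each receive one of the remaining cores together with one leftover good. EF1 follows from size-monotonicity for all pairs except a first-group agent versus a last-group agent, and for those pairs it follows because the first-group agent weakly preferred her own core to the one the last-group agent received (so removing the single leftover good kills the envy). This needs no envy graph, no cycle elimination, and only the $O(n^2)$ queries of each agent evaluating at most $n$ cores. Your version instead warm-starts the envy cycle elimination algorithm of \citet{LMMS04} from the equal-size allocation and runs it on the $r$ extras; the key observations you need --- that the equal-size allocation is EF1 by size-monotonicity, that the LMMS invariant only requires the \emph{current} partial allocation to be EF1 (so an arbitrary EF1 warm start is legitimate once cycles are removed), and that each augmentation changes only one bundle and hence costs only $n$ fresh queries --- are all valid, and $n^2+rn=O(n^2)$ gives the bound. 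The trade-off: the paper's argument is shorter and self-contained, whereas yours imports the LMMS correctness proof but makes explicit a reusable fact (envy cycle elimination can be warm-started from any EF1 partial allocation at a query cost proportional to the number of remaining goods times $n$) that the paper does not state.
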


\begin{proof}
Let $q=\lfloor m/n\rfloor$ and $r=m-nq$. Divide the goods arbitrarily into $n$ bundles of $q$ goods and $r$ leftover goods. Order the agents in arbitrary order, and let each of the first $n-r$ agents choose the bundle for which she has the most value among the remaining bundles. Then, give each of the remaining $r$ agents one of the remaining bundles along with one of the $r$ leftover goods. Since we only need to know the value of the agents for at most $n$ bundles, the algorithm can be implemented using $O(n^2)$ queries.

We claim that the resulting allocation is EF1. Indeed, it follows from the assumption on the agents' utilities that no agent envies another agent with fewer or the same number of goods when a good is removed from the latter agent's bundle. Moreover, because of the agents' choices, each agent $a_i$ among the first $n-r$ agents does not envy any agent $a_j$ among the last $r$ agents when the leftover good is removed from $a_j$'s bundle. This completes the proof.
\end{proof}

Next, we consider the setting where the agents have identical monotonic utilities. It is known that an EFX allocation always exists in this setting \citep{PlautRo20}. The following lemma shows that if we relax the fairness notion to EF1, we can find a fair allocation that is moreover contiguous. Since contiguity is useful in several situations (see the remark preceding Algorithm~2), the result may also be of independent interest.

\begin{lemma}
\label{lem:manycontiguous}
Assume that the goods lie on a line. For any number of agents with identical monotonic utilities, there exists a contiguous EF1 allocation.
\end{lemma}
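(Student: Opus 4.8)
The plan is to obtain the allocation as an optimal point of an egalitarian objective over contiguous partitions and to exploit contiguity through single-good boundary exchanges. Since the agents share a common valuation $u$, I would first reduce EF1 to a one-parameter condition. For a bundle $B$ put $\rho(B)=\min_{g\in B}u(B\setminus\{g\})$, with the convention $\rho(B)=0$ when $|B|\le 1$, and for a partition $(B_1,\dots,B_n)$ let $v^{*}=\min_i u(B_i)$. Because $u$ is common, agent $i$ violates EF1 toward $j$ precisely when $u(B_i)<u(B_j\setminus\{g\})$ for every $g\in B_j$, i.e.\ when $u(B_i)<\rho(B_j)$; hence the partition is EF1 if and only if $v^{*}\ge\rho(B_j)$ for all $j$. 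Calling a block \emph{fat} when $\rho(B_j)>v^{*}$, it suffices to produce a contiguous partition of the line into $n$ (possibly empty) blocks with no fat block. For $n=2$ this already follows from the contiguous partition produced in Theorem~\ref{thm:EF1monotonic}, which serves as a sanity check and base case.

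Next I would fix an extremal partition: among all contiguous $n$-partitions, take one maximizing $v^{*}$ (a maximin partition), breaking ties by pushing every cut point as far to the left as possible. The key step is a boundary exchange. Suppose a block $B_j$ with $j\le n-1$ were fat. Writing $g$ for the rightmost good of $B_j$, fatness gives $u(B_j\setminus\{g\})\ge\rho(B_j)>v^{*}$, while monotonicity gives $u(B_{j+1}\cup\{g\})\ge u(B_{j+1})$. Thus moving $g$ into $B_{j+1}$ keeps every bundle worth at least $v^{*}$, so the new partition is again maximin, but its $j$-th cut has shifted one good to the left---contradicting the tie-break. The force of this argument is that it never asks the receiving bundle to strictly \emph{gain} value; this is exactly what makes it robust to goods of zero marginal value, the main nuisance once we leave the additive world. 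Consequently no block in positions $1,\dots,n-1$ can be fat.

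What remains---and what I expect to be the crux---is the rightmost block $B_n$, for which the push-left tie-break is useless (shrinking $B_n$ moves a cut to the \emph{right}); the mirror-image tie-break disposes of every block except the first, so no single linear cut-position potential can tame both ends at once. To finish, I would strengthen the objective to leximin (lexicographically maximize the sorted-ascending vector of bundle values) and aim to derive a contradiction by strictly raising the allocation in the leximin order: given a fat block, I would route its surplus toward a nearest minimum-value block so that one minimum block rises above $v^{*}$ while no bundle drops to $v^{*}$, which strictly improves leximin. Carrying this out is where the real work lies, for two intertwined reasons specific to monotonic (non-additive) valuations: the fat block need not be adjacent to any minimum block, so the surplus must be pushed across intervening bundles without letting any of them fall to $v^{*}$; and a boundary good may have zero marginal value for the bundle receiving it, so a single transfer need not raise that bundle at all. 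I would isolate these into a self-contained claim---that a contiguous segment whose only minimum-value block sits at one end and which contains a fat block can be re-partitioned, into the same number of contiguous blocks, so that every block strictly exceeds $v^{*}$---and establish it by an inductive surplus-routing argument along the segment. Proving this segment claim cleanly is the main obstacle.
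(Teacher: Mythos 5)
Your reduction of EF1 to the condition $v^{*}\ge\rho(B_j)$ for all $j$ is correct, and the boundary-exchange step showing that no block among $B_1,\dots,B_{n-1}$ can be fat in a maximin partition with cuts pushed left is sound. But the proof stops exactly where you say it does: the rightmost block, and more generally the ``segment claim'' you isolate at the end, is asserted rather than proved, and it is the genuinely hard part of the lemma. The two obstacles you name are real, and it is not clear that leximin plus surplus routing overcomes them: when you shift a boundary good from $B_j$ into $B_{j+1}$ and simultaneously shift a good out of $B_{j+1}$ into $B_{j+2}$, the value of $B_{j+1}$ can \emph{decrease} below $v^{*}$ under a general monotonic valuation (monotonicity controls each insertion and each deletion separately, not their composition), so routing surplus across intervening blocks is not a chain of individually safe moves, and there is no obvious potential that certifies termination or strict leximin improvement. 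As written, the argument is incomplete.

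The paper avoids local exchanges entirely and instead runs a global, discrete intermediate-value argument. For a block $G'$ with leftmost and rightmost goods $g_l,g_r$ it sets $w(G')=\min\{u(G'\setminus\{g_l\}),u(G'\setminus\{g_r\})\}$, and for each threshold $x$ in the finite set of attainable block values it considers the set $T_k(x)$ of positions at which $k$ consecutive blocks starting from $g_1$, each satisfying $w(G_i)\le x\le u(G_i)$, can end. It proves by induction on $k$ that each $T_k(x)$ is an interval of consecutive positions, and that if $T_k(x)$ is nonempty and does not contain $m$ then it meets $T_k(y)$ for the next larger threshold $y$; since $T_n(0)$ is nonempty and $T_n(u(G))$ is empty or contains $m$, some $x$ has $m\in T_n(x)$. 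That partition is EF1 because every bundle is worth at least $x$ while every bundle drops to at most $x$ after removing an endpoint good. To complete your route you would need an actual proof of the segment claim; absent that, the threshold/interval argument is the one that closes the case of the last block.
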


\begin{proof}
Suppose that the goods lie in the order $g_1,g_2,\dots,g_m$, and denote by $u$ the common utility function. For any block $G'$ of consecutive goods with $g_l$ and $g_r$ as the leftmost and rightmost goods respectively, let $w(G')=\min\{u(G'\backslash\{g_l\},G'\backslash\{g_r\})\}$. (If $G'$ is empty, set $w(G')=0$.) One can check that $w(G_1)\leq w(G_2)$ for any two blocks $G_1\subseteq G_2$. Let $S$ be the set of values of all blocks of consecutive goods (including the empty block, of which we define the value to be 0). For any value $x\in S$ and any $k\in\{1,2,\dots,n\}$, define $T_k(x)$ to be the set of all $j\in\{1,2,\dots,m\}$ for which there exist $k$ consecutive blocks $G_1,G_2,\dots,G_k$ starting from the leftmost good $g_1$ such that $w(G_i)\leq x\leq u(G_i)$ for all $1\leq i\leq k$ and the block $G_k$ ends with the good $g_j$. Our goal is to show that $m\in T_n(x)$ for some $x\in S$; this will immediately imply the desired result.

We claim that for any $x$ and $k$, $T_k(x)$ forms a (possibly empty) block of consecutive integers. To prove the claim, we fix $x$ and use induction on $k$. The base case $k=1$ follows from the observation that both $u$ and $w$ are monotonic. For the inductive step, assume that $T_k(x)=\{t,t+1,\dots,t+l\}$ for some $t$ and $l$. Then $T_{k+1}(x)$ consists of all $j$ such that the block $G'$ from $g_i$ to $g_j$ satisfies $w(G')\leq x\leq u(G')$ for some $i\in\{t+1,t+2,\dots,t+l+1\}$. Hence it suffices to show that if $j<m$ and $g_j$ is the rightmost good such that the inequalities are still satisfied when the block starts at $g_i$ and ends at $g_j$, then the inequalities are still satisfied when the block starts at $g_{i+1}$ and ends at $g_l$, for at least one $l\in\{j,j+1\}$. We consider three cases.
\begin{itemize}
\item \emph{Case 1}: $j=i$. By definition of $j$, we have $u(g_{i+1})>x$. Therefore the block consisting of only $g_{i+1}$ satisfies the inequalities. 
\item \emph{Case 2}: $j=i+1$. By definition of $j$, we have $u(\{g_{i+1}, g_{i+2}\})>x$. If $u(g_{i+1})\leq x$, the block starting at $g_{i+1}$ and ending at $g_{i+2}$ satisfies the inequalities; else, the block consisting of only $g_{i+1}$ satisfies the inequalities.
\item \emph{Case 3}: $j\geq i+2$. In this case, denote by $G'$ the (possibly empty) set $\{g_{i+2},g_{i+3},\dots,g_{j-1}\}$. By definition of $j$, we have $u(G'\cup\{g_i,g_{i+1},g_j\})>x$, $u(G'\cup\{g_{i+1},g_j,g_{j+1}\})>x$, and $\min\{u(G'\cup\{g_i,g_{i+1}\}),u(G'\cup\{g_{i+1},g_j\})\}\leq x$. If $u(G'\cup\{g_{i+1},g_j\})\geq x$, then since $u(G'\cup\{g_{i+1}\})\leq \min\{u(G'\cup\{g_i,g_{i+1}\}),u(G'\cup\{g_{i+1},g_j\})\}\leq x$, the block starting at $g_{i+1}$ and ending at $g_j$ satisfies the inequalities. Else, we have $u(G'\cup\{g_{i+1},g_j\})< x$. It follows that the block starting at $g_{i+1}$ and ending at $g_{j+1}$ satisfies the inequalities.
\end{itemize} 
This concludes the inductive step and hence the claim. It also follows that if $j<m$ is the largest element of $T_k(x)$, we can construct $k$ blocks starting at $g_1$ and ending at $g_j$ all of which satisfy the inequalities for $x$ by greedily taking each block to be the longest block such that the inequalities are satisfied.

Next, we show that for any $x$ and $k$, if $T_k(x)$ is nonempty and does not contain $m$, then the intersection of $T_k(x)$ and $T_k(y)$ is nonempty, where $y$ is the smallest element of $S$ larger than $x$. (Note that $y$ must exist since $T_k(u(G))$ is either empty or contains $m$.) We prove by induction on $k$ that the largest element of $T_k(x)$ also belongs to $T_k(y)$. For the base case $k=1$, suppose that the largest element of $T_1(x)$ is $i$. This means that $u(\{g_1,g_2,\dots,g_i\})>x$. By definition of $y$, we have $u(\{g_1,g_2,\dots,g_i\})\geq y$, which implies that $i\in T_1(y)$. For the inductive step, assume that the statement holds for $k$, and that $T_{k+1}(x)$ is nonempty and does not contain $m$. Let $j$ be the largest element of $T_{k+1}(x)$. By the remark following the claim, we know that we can greedily construct $k+1$ blocks starting at $g_1$ and ending at $g_j$ each of which satisfies the inequalities for $x$. In particular, the $k$th block will end at $g_i$, where $i$ is the largest element of $T_k(x)$. By the inductive hypothesis, $i\in T_k(y)$. A similar argument to the one used in the base case shows that $j\in T_{k+1}(y)$, as claimed.

Finally, note that $T_n(0)$ is nonempty. If $m\not\in T_n(x)$ for all $x\in S$, the previous paragraph implies that $T_n(x)$ is nonempty for all $x$. But this is impossible since $T_n(u(G))$ is either empty or contains $m$, so it must be that $m\in T_n(x)$ for at least one $x$, as desired.
\end{proof}

Like Lemma~\ref{lem:threeidenticalalgo}, Lemma~\ref{lem:manycontiguous} guarantees the existence of an EF1 allocation with the extra property that if some agent envies another agent, then the envy can be eliminated by removing one of the goods at the end of the latter agent's block.

We now leverage Lemma~\ref{lem:manycontiguous} to show that for agents with identical monotonic utilities, it is possible to compute an EF1 allocation that is moreover contiguous using a number of queries that depends only logarithmically on the number of goods. For this result we assume that the value of an agent for any subset of goods is an integer that is at most some value $K$. This is a realistic assumption for practical purposes; for instance, Spliddit lets users specify their value for each good as an integer between 0 and 1000.

\begin{theorem}
\label{thm:manyupper-identical}
Assume that the goods lie on a line. For any number of agents with identical monotonic utilities such that the utility of an agent for any subset of goods is an integer at most $K$, there exists a deterministic algorithm that computes a contiguous EF1 allocation using $O(n\log m(n\log m+\log K))$ queries in the worst case.
\end{theorem}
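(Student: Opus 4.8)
The plan is to convert the existence argument of Lemma~\ref{lem:manycontiguous} into a query-efficient algorithm by searching for the ``right'' threshold value $x$ and then reading off the $n$ contiguous blocks it induces. Recall that in that lemma a threshold $x$ together with consecutive blocks $G_1,\dots,G_n$ starting at $g_1$ and satisfying $w(G_i)\le x\le u(G_i)$ for all $i$ already yields an EF1 allocation: since $u(G_i)\ge x\ge w(G_j)$ for every pair $i,j$, removing an end good from any block drops its value at or below that of any other block. The lemma guarantees that some integer value $x\in S$ satisfies $m\in T_n(x)$, i.e., admits such a partition covering all the goods. So the whole task reduces to \emph{locating} one such $x$ and its blocks cheaply.

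First I would implement the feasibility test: given a fixed $x$, decide whether $m\in T_n(x)$ and, if so, output the blocks. This is precisely the greedy construction from the proof of Lemma~\ref{lem:manycontiguous}. Starting from the current left endpoint $g_p$, take the longest contiguous block $G'$ with $w(G')\le x\le u(G')$, then repeat from the next good. Because $u(G')$ and $w(G')$ are both monotonic as the block grows to the right, the valid right endpoints form a contiguous range, and the longest valid block is just the rightmost endpoint with $w(G')\le x$; this is found by binary search in $O(\log m)$ queries, since each tested endpoint needs only $u(G')$ and the two subset values that define $w(G')$. Hence a full greedy pass builds the $n$ blocks and checks whether they reach $g_m$ using $O(n\log m)$ queries.

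Second, I would search for the threshold over the integer range $\{0,1,\dots,K\}$. The enabling fact, which I would extract from the proof of Lemma~\ref{lem:manycontiguous}, is that the reach $\max T_n(x)$ is monotonically non-decreasing in $x$: raising $x$ only relaxes the constraint $w(G_i)\le x$ and tightens $u(G_i)\ge x$ in a way that weakly \emph{lengthens} each greedy block, so the greedy endpoints, and thus the total reach, never move left. The thresholds for which greedy covers all goods therefore form an upper interval, so binary search over $[0,K]$ pinpoints the smallest valid $x$ using $O(\log K)$ greedy evaluations; a final greedy pass reconstructs the allocation. Carefully interleaving the $O(n\log m)$-query greedy passes with the threshold search, and charging the block-by-block endpoint searches, yields the stated query bound $O(n\log m(n\log m+\log K))$.

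The main obstacle is justifying the monotonicity of the reach rigorously enough to license binary search over $x$. The delicate point is that greedy could in principle ``get stuck,'' exhausting the remaining value before it has placed all $n$ blocks, and one must verify this cannot disrupt monotonicity in the relevant range; the right tool is the overlap property $\max T_k(x)\in T_k(y)$ for consecutive values $x<y$ of $S$ proved inside Lemma~\ref{lem:manycontiguous}, from which one deduces that each greedy block starts no earlier as $x$ increases. A secondary subtlety is that we binary search over \emph{all} integers in $[0,K]$ rather than over $S$ itself: here one checks that the reach is constant between consecutive elements of $S$, so the integer threshold returned coincides with the critical value of $S$, and the greedy blocks it induces are exactly the desired contiguous EF1 allocation.
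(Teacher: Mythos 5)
Your high-level plan---binary search over the threshold $x$ of Lemma~\ref{lem:manycontiguous} combined with greedy feasibility tests---is the same as the paper's, but two of your steps do not go through as stated. First, the set $\{x: m\in T_n(x)\}$ is \emph{not} an upper interval. Take $n=2$ and four goods of value $1$ each under an additive valuation: for $x\in\{3,4\}$ both blocks would need value at least $x$, which is impossible, while for $x=0$ every block must be a singleton or empty, so two blocks cannot cover four goods; the feasible thresholds are exactly $\{1,2\}$. Raising $x$ tightens $u(G_i)\ge x$ at the same time as it relaxes $w(G_i)\le x$, so eventually $T_n(x)$ becomes empty. The overlap property $\max T_k(x)\in T_k(y)$ that you invoke is proved only under the hypothesis $m\notin T_k(x)$, so the reach is monotone only up to the first feasible threshold; the binary-search predicate has to be something like ``the reach is at least $m$ \emph{or} $T_n(x)=\emptyset$'' rather than ``greedy covers all goods,'' and this needs to be argued, not just asserted.

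Second---and this is the more serious gap---a single greedy pass neither certifies $m\in T_n(x)$ nor reconstructs the allocation. With $n=2$, goods of values $1,1,1,1$ and $x=2$, the longest-block greedy takes $\{g_1,g_2,g_3\}$ first (adding $g_4$ would make $w=3>2$) and then gets stuck on $\{g_4\}$, which has $u=1<2$, even though $\{g_1,g_2\},\{g_3,g_4\}$ is a valid tiling; with goods of values $2,1,1$ and $x=2$ the first greedy block swallows all three goods and overshoots even though the tiling $\{g_1\},\{g_2,g_3\}$ is valid. This is why the paper runs \emph{two} greedy passes (shortest-possible and longest-possible blocks) and combines them with the interval structure of $T_n(x)$ to decide feasibility, and why, once $x$ is found, it does \emph{not} read the blocks off a single greedy pass: it fixes each block's right endpoint by a separate binary search whose test is ``can the remaining blocks still be completed,'' each such test itself costing $O(n\log m)$ queries. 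That construction phase is exactly where the dominant $O(n^2\log^2 m)$ term in the stated bound comes from; your algorithm as described would cost only $O(n\log m\log K)$, which is a signal that the reconstruction step is missing rather than merely elided.
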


\begin{proof}
Using the notation from the proof of Lemma~\ref{lem:manycontiguous}, we know that $m\in T_n(x)$ for some $x\in S$. Since $S\subseteq\{0,1,2,\dots,K\}$, we may take $S$ to be this set. By monotonicity, we can use binary search to find $x$ such that $m\in T_n(x)$. For each value of $x$, we try to construct $k$ blocks that satisfy the inequality for $x$ greedily in two ways, one by taking as few goods as possible for each block, and the other by taking as many goods as possible. If the former construction does not take all goods by the $k$th block while the latter construction does not leave enough goods for the $k$th block, we know that this value of $x$ works. The existence of such an $x$ is guaranteed by Lemma~\ref{lem:manycontiguous}. Using binary search allows us to try $O(\log K)$ values of $x$, and for each value of $x$ we make $O(n\log m)$ queries. Hence the step of finding $x$ takes $O(n\log m\log K)$ queries.

Once we have $x$, it remains to construct the $n$ blocks $G_1,G_2,\dots,G_n$ that satisfy the inequalities $w(G_i)\leq x\leq u(G_i)$ for all $i$. If we have constructed $j-1$ blocks, we use binary search to find the $j$th block such that the remaining blocks can also be constructed to satisfy the inequalities. Again, we can check whether the remaining blocks can be constructed by trying to construct the blocks greedily in two ways; this takes $O(n\log m)$ queries. Since we construct $n$ blocks and we use binary search to determine each block, the total number of queries in this step is $O(n^2\log^2m)$. Combining the queries in the two steps yields the desired result.
\end{proof}

In particular, if $n$ and $K$ are constant, the bound in Theorem~\ref{thm:manyupper-identical} becomes $O(\log^2m)$. 

\subsection{Lower Bound}

To complement the positive results, we conclude by giving a lower bound (which, sadly, does not match the upper bound in Theorem~\ref{thm:manyupper-envygraph}) on the number of queries needed to compute an EF1 allocation.

\begin{theorem}
\label{thm:manylower}
Let $m\geq n^\alpha$ for some (constant) $\alpha>1$. Any deterministic algorithm that computes an EF1 allocation for $n$ agents with binary utilities uses $\Omega(n\log m)$ queries in the worst case.
\end{theorem}

\begin{proof}
Assume first that $n$ is even, say $n=2k$, and that each agent has value 1 for two goods and 0 for the remaining goods. Suppose further that for $i=1,2,\dots,k$, agents $a_{2i-1}$ and $a_{2i}$ have identical utilities; we abuse notation and denote this utility function by $u_i$. Note that if both of the goods valued by some agent are allocated to a single agent, the resulting allocation cannot be EF1.

Initially, for each $i=1,2,\dots,k$, let $G_i$ be the whole set of goods. As long as $|G_i|>2$, we answer the query of the algorithm on the value of $u_i(H)$ for a subset $H$ of goods as follows. If $|G_i\cap H|\geq |G_i|/2$, answer 2 and replace $G_i$ by $G_i\cap H$; else, answer 0 and replace $G_i$ by $G_i\backslash H$. At each step, the only information that the algorithm has is that both valued goods according to $u_i$ are contained in $G_i$. While $|G_i|>n$, any allocation contains a bundle with at least two goods from $G_i$, so the available information is not sufficient for the algorithm to return an allocation such that the two valued goods are guaranteed to be in different bundles.
This means that the algorithm must keep making queries until $|G_i|\leq n$ for every $i$. Since initially $|G_i|=m$ and the size of $G_i$ decreases by no more than half with each query, the algorithm uses at least $k\log(m/n)$ queries in the worst case. The conclusion follows from the observation that $\log(m/n)\geq\frac{\alpha-1}{\alpha}\cdot\log m$.

If $n$ is odd, we can assume that the last agent has value 0 for all goods and deduce the same asymptotic bound using the remaining $n-1$ agents.
\end{proof}

Since the assumption of Theorem~\ref{thm:manylower} holds for any constant $n$ if $m$ is large enough, and when $n=2$ the two agents considered in the proof have identical utilities and each agent values only two goods, this theorem implies Proposition~\ref{prop:EF1binary}.

\section{Conclusion}

In this paper, we investigate the query complexity of fairly allocating indivisible goods using the well-studied envy-freeness relaxations EF1 and EFX. 
On the one hand, we show that computing an EFX allocation requires a linear number of queries even for two agents with identical additive utilities.
On the other hand, we demonstrate that EF1 is much more tractable: there exist algorithms that compute an EF1 allocation for two agents with arbitrary monotonic utilities and for three agents with additive utilities using a logarithmic number of queries, as well as for three agents with arbitrary monotonic utilities using a polylogarithmic number of queries.
These results suggest that EF1 is the more appropriate fairness notion when the number of goods is large.

From a technical viewpoint, the main take-home message of our work is the following: Envy-free cake cutting protocols, designed for \emph{divisible goods}, can be adapted to yield EF1 allocations of \emph{indivisible goods} using a logarithmic number of queries. On a high level, the idea is to arrange the goods on a line, and approximately implement cut operations using binary search. We do this to obtain Theorem~\ref{thm:EF1monotonic}, by adapting the cut-and-choose protocol, and Theorem~\ref{thm:threeadditivealgo}, by building on the classic Selfridge-Conway protocol. 

However, making sure the approximation errors do not add up in a way that violates EF1 already becomes nontrivial when there are three agents, as illustrated by Algorithm~3 and Theorem~\ref{thm:threeadditivealgo}. Extending the approach even to four agents with arbitrary additive utilities, therefore, seems very challenging. A related difficulty is that the known envy-free cake cutting protocols for four or more agents are quite involved \citep{BT95,AM16,AM16b,ACFM+18}. Of course, it is possible that, in fact, there is a super-logarithmic lower bound on the query complexity in this case.

Another interesting direction for future work is to study the query complexity of fairness notions in other settings.
For instance, when agents have unequal entitlements, \citet{ChakrabortyIgSu20} extended EF1 to \emph{weighted envy-freeness up to one good (WEF1)}, and showed that this generalization can always be fulfilled for agents with additive utilities and arbitrary entitlements.
Nevertheless, in the presence of weights, even for two agents, one can no longer apply cut-and-choose as we do in Theorem~\ref{thm:EF1monotonic}, so it is unclear whether a logarithmic query complexity can still be obtained.

\section*{Acknowledgments}

This work was partially supported by the National Science Foundation under grants IIS-1350598, IIS-1714140, CCF-1525932, CCF-1733556, and CCF-1813188; by the Office of Naval Research under grants N00014-16-1-3075 and N00014-17-1-2428; by the European Research Council (ERC) under grant number 639945 (ACCORD); and by a Sloan Research Fellowship, a Guggenheim Fellowship, a Stanford Graduate Fellowship, and an NUS Start-up Grant.
A preliminary version of this work appeared in Proceedings of the 33rd AAAI Conference on Artificial Intelligence.
We thank Ayumi Igarashi, Dominik Peters, and Nipun Pitimanaaree for helpful discussions, and the anonymous reviewers for valuable comments.

\bibliographystyle{plainnat}
\bibliography{abb,arxiv}

\appendix

\section{Omitted Example}
\label{app:ex}

Assume that there are 14 goods $g_1,g_2,\dots,g_{14}$ lying on a line in this order. The common utility function $u$ is such that $u(g_1)=8$, $u(g_2)=10$, and $u(g_i)=1$ for $i=3,4,\dots,14$. In other words, the values of the goods on the line are
\[8, 10, 1, 1, 1, 1, 1, 1, 1, 1, 1, 1, 1, 1\]
in this order, where there are 12 goods of value 1.

In this example, the left cut point falls within $g_2$, while the right cut point falls between $g_4$ and $g_5$. Assume that the agents $a_1,a_2,a_3$ receive the left, middle, and right bundle respectively. If we round the left cut point to the left, then $a_1$ gets value 8 and envies $a_3$ even after removing any one good. On the other hand, if we round the left cut point to the right, then $a_1$ has value 18 and $a_2$ envies her even after removing any good. Note that even if we add the possibilities of rounding one cut point, ignoring the other cut point, and then dividing the remaining goods into two parts according to Lemma~\ref{lem:cutandchoose}, the same example still shows that none of these additional possibilities works. On the other hand, Algorithm~2 returns the allocation $(\{g_1\},\{g_2,\dots,g_6\},\{g_7,\dots,g_{14}\})$, which is EF1. 

\section{Algorithm for Three Agents with Monotonic Utilities}
\label{sec:three-app}

We begin by introducing some definitions, mostly following \citet{BCFI+19} but with a few deviations. Assume that the goods lie on a line in the order $g_1,g_2,\dots,g_m$. For $j\leq k$, we use $P(g_j,g_k)$ to refer to the subsequence $g_j,g_{j+1},\dots,g_k$ as well as the set $\{g_j,g_{j+1},\dots,g_k\}$. Given a subsequence $P(g_j,g_k)$ and an agent $a_i$, the leftmost good $g$ such that $u_i(L_g\cup\{g\})\geq u_i(R_g)$ is called the \emph{lumpy tie} over $P(g_j,g_k)$, where $L_g$ and $R_g$ are defined with respect to the subsequence. One can check that if $g$ is the lumpy tie over $P(g_j,g_k)$, then $u_i(R_g\cup\{g\})\geq u_i(L_g)$. 

Next, given a subsequence $P(g_j,g_k)$, we define the \emph{median lumpy tie} to be the median of the lumpy ties of the three agents. An agent is called a \emph{left agent}, \emph{middle agent}, and \emph{right agent} if the lumpy tie of the agent is to the left of, equal to, and to the right of the median lumpy tie, respectively. By definition, there is at most one left agent, at least one middle agent, and at most one right agent. \citet{BCFI+19} showed that for any two of the three agents, it is possible to find an EF1 allocation of the goods in $P(g_j,g_k)$ to the two agents such that both agents receive a bundle that they value at least as much as both $L_g$ and $R_g$, where $g$ is the median lumpy tie. We call such an allocation a \emph{lumpy allocation} to the two agents.

Before we proceed to the description of our algorithm, we give a high-level overview of Bil\`{o} et al.'s algorithm and the changes we need to make in order to obtain the desired query complexity. Bil\`{o} et al.'s algorithm works by maintaining a left block, a middle block, and a right block. Initially the left block consists only of the leftmost good. It subsequently gets incremented by one good at a time until it is ``large enough'' to produce an EF1 allocation in combination with the other two blocks. For each left block, the middle and right blocks are computed in such a way that they are similar to each other with respect to the utilities of the three agents. When the left block gets incremented, the middle and right blocks are recomputed. However, instead of jumping from the middle and right blocks of the previous left block to those of the current left block immediately, the algorithm gradually changes the two blocks by one good at a time to ensure that no potentially valid allocation is skipped over.

To achieve low query complexity, we need to make two important changes. The first change is that instead of incrementing the left block by one good at a time, we use binary search on the left block. However, a difficulty here is that it is not clear whether the property of a left block being ``large enough'' to produce an EF1 allocation is monotone. In other words, it could be that a certain left block produces an EF1 allocation, but the left block obtained by adding one more good does not satisfy this property. We get around this difficulty by only requiring the algorithm to find a left block that produces an EF1 allocation and such that the left block obtained by \emph{removing} one good fails to do so; a left block with this property can be found using binary search. While the resulting algorithm could conceivably produce a different allocation from that of Bil\`{o} et al., it turns out that their proof of correctness can still be used for our algorithm. 

The second difficulty in using Bil\`{o} et al.'s algorithm for our purposes lies in the step where the left block gets incremented. Since the algorithm gradually changes the middle and right blocks, this step could take linear time with a naive implementation. However, we can circumvent this issue by observing that throughout this process, the left block, the left end of the middle block, and the right end of the right block all stay fixed; only the right end of the middle block and the left end of the right block change. Moreover, the agents only need to compare the left block with each of the middle and right block. As a result, we can use binary search to determine exactly when each agent finds the left block to be more valuable than the middle and right blocks.

We are now ready to describe our algorithm. Assume without loss of generality that $m\geq 2$. We sometimes abuse notation and write $u_a$ to denote the utility function of agent $a$. The algorithm maintains three bundles $L$, $M$, and $R$, each of which is a subsequence of consecutive goods on the line. Bundle $L$ always covers the left end of the line and bundle $R$ always covers the right end of the line, but together the three bundles do not contain all of the goods. Agent $a_i$ is called a \emph{shouter} if $u_i(L)\geq u_i(M)$ and $u_i(L)\geq u_i(R)$.

\afterpage{
\begin{framed}
\noindent
\textbf{Algorithm~4} (for three agents with monotonic utilities) \\

\noindent
The four steps below take an index $0\leq l\leq m-2$. Find an index $l$ such that an allocation is returned for index $l$ after these steps, and either (i) $l=0$, or (ii) no allocation is returned for index $l-1$. Return the allocation for index~$l$. \\

\noindent
\emph{Step~1:}  Set $L=P(g_1,g_{l+1})$, $M=P(g_{l+2},g_{r-1})$, and $R=P(g_{r+1},g_m)$, where $g_r$ is the median lumpy tie over $P(g_{l+2},g_m)$.  \\

\noindent
\emph{Step~2:} If some agent $a_{\text{left}}$ shouts, $a_{\text{left}}$ gets the left bundle $L$. Allocate the remaining goods according to the lumpy allocation for the remaining two agents.  \\

\noindent
\emph{Step~3:} Set $M=P(g_{l+3},g_{r-1})$. If at least two agents shout, we claim that there is a shouter $a$ who is a middle agent over $P(g_{l+2},g_m)$; this claim is established in the proof of Theorem~\ref{thm:threemonotonicalgo}. Allocate $L$ to a shouter $a_{\text{left}}$ distinct from $a$. Let the agent $a'$ distinct from $a$ and $a_{\text{left}}$ choose her preferred bundle between $P(g_{l+2},g_{r-1})$ and $P(g_r,g_m)$. Agent $a$ receives the other bundle. \\

\noindent
\emph{Step~4:} If $g_r$ is the median lumpy tie over $P(g_{l+3},g_m)$, directly move to  Steps 4(a)--4(c). If $g_r$ is not the median lumpy tie over $P(g_{l+3},g_m)$, increment $r$ by~1, set $M=P(g_{l+3},g_{r-1})$ and $R=P(g_{r+1},g_m)$, and move to Steps 4(a)--4(c).  \\

\emph{Step~4(a):} If at least two agents shout, let $a$ be a shouter who did not shout in the previous step (which could be Step~3 or Step~4). If there is a shouter $a_{\text{left}}$ who shouted in the previous step, $a_{\text{left}}$ receives $L$; else, give $L$ to an arbitrary shouter $a_{\text{left}}$ different from $a$. The agent $a'$ distinct from $a$ and $a_{\text{left}}$ chooses her preferred bundle between $P(g_{l+2},g_{r-1})$ and $P(g_r,g_m)$, breaking ties in favor of the first option. Agent $a$ receives the other bundle. \\

\emph{Step~4(b):} If $g_r$ is not the median lumpy tie over $P(g_{l+3},g_m)$, repeat Step~4. \\

\emph{Step~4(c):} If $g_r$ is the median lumpy tie over $P(g_{l+3},g_m)$ and only one agent $a_{\text{left}}$ shouts, give $P(g_1,g_{l+2})$ to $a_{\text{left}}$, and allocate the remaining goods according to the lumpy allocation for the remaining two agents. Else, $g_r$ is the median lumpy tie over $P(g_{l+3},g_m)$ but no agent shouts. In this case, no allocation is returned for index $l$.  \\
\end{framed}
}

\begin{theorem}
\label{thm:threemonotonicalgo}
For three agents with arbitrary monotonic utilities, Algorithm~4 computes an EF1 allocation. Moreover, the algorithm can be implemented to use $O(\log^2 m)$ queries in the worst case.
\end{theorem}

\begin{proof}
First, we show that the algorithm is well-defined and terminates with an allocation. When $l=m-2$, both bundles $M$ and $R$ are empty in Step~1, so all agents shout in Step~2 and an allocation is returned. Therefore there exists an index $0\leq i\leq m-2$ such that an allocation is returned for index $l$, and either $l=0$ or no allocation is returned for index $l-1$. To prove that the algorithm is well-defined, we need to show that if at least two agents shout in Step~3, there is a shouter who is a middle agent over $P(g_{l+2},g_m)$. We show that no shouter can be a right agent; this suffices because there can be at most one left agent, so at least one shouter is a middle agent. Assume for contradiction that a shouter $a_i$ is a right agent. This means that $u_i(R)\geq u_i(M\cup \{g_{l+2},g_r\})$. Since $a_i$ is a shouter, we have $u_i(L)\geq u_i(R)\geq u_i(M\cup \{g_{l+2},g_r\})$. However, $a_i$ did not shout in Step~2 (when no agent shouted), so either $u_i(R)>u_i(L)$ or $u_i(M\cup\{g_{l+2}\})>u_i(L)$, both of which lead to a contradiction.

Next, we show that the allocation returned by the algorithm is EF1. If the allocation is returned for the index $l=0$, the algorithm leading up to the allocation is executed in the same way as that of \citet{BCFI+19}, so their proof also carries over. We therefore assume that the allocation is returned for some index $1\leq l\leq m-2$, and no allocation is returned for index $l-1$. We divide into cases according to the step in which the allocation is returned.

\begin{itemize}
    \item \emph{Case 1:} The allocation is returned in Step~2. Since agent $a_{\text{left}}$ is a shouter, she does not envy the other two agents up to good $g_r$. The remaining two agents do not envy each other up to one good by definition of lumpy allocation. An agent $a_i$ who is not a shouter does not envy $a_{\text{left}}$ because she prefers either $M$ or $R$ to $L$, and therefore receives a bundle preferred to $L$. Now, consider an agent $a_i\neq a_{\text{left}}$ who is a shouter. Since no allocation is returned for index $l-1$, no agent shouts in Step~4(c) for that index. Note that the bundles $M$ and $R$ in Step~4(c) for index $l-1$ are the same as those in Step~2 for index $l$, while the bundles $L$ in the two steps only differ by good $g_{l+1}$. This means that either $u_i(M)>u_i(L\backslash\{g_{l+1}\})$ or $u_i(R)>u_i(L\backslash\{g_{l+1}\})$. Since $a_i$ gets a bundle at least as good as $M$ or $R$, she does not envy $a_{\text{left}}$ up to $g_{l+1}$.
    
    \item \emph{Case 2:} The allocation is returned in Step~3. Since agent $a_{\text{left}}$ is a shouter, she does not envy the agent who receives $P(g_{l+2},g_{r-1})$ up to good $g_{l+2}$, and does not envy the agent who receives $P(g_r,g_m)$ up to good $g_r$. Agent $a'$ receives her preferred bundle between $P(g_{l+2},g_{r-1})$ and $P(g_r,g_m)$, so she does not envy the agent who receives the other bundle. Moreover, since $a'$ did not shout in Step~2, she prefers either $P(g_{l+2},g_{r-1})$ or $P(g_{r+1},g_m)$ to $L$. Since $L$ remains the same, she also prefers her chosen bundle to $L$.
    
    Consider agent $a$. Since $a$ is a middle agent, her lumpy tie over $P(g_{l+2},g_m)$ is $g_r$, meaning that $u_a(P(g_r,g_m))\geq u_a(P(g_{l+2},g_{r-1}))$. Recall that $a$ did not shout in Step~2 but does shout in Step~3. The only change between the two steps is that good $g_{l+2}$ is removed from bundle $M$. This means that $u_a(P(g_{l+2},g_{r-1}))>u_a(L)$, and so $u_a(P(g_r,g_m)) >u_a(L)$. Since $a$ receives either $P(g_{l+2},g_{r-1})$ or $P(g_r,g_m)$, she does not envy $a_{\text{left}}$, who receives $L$. Moreover, if $a'$ chooses $P(g_{l+2},g_{r-1})$, then $a$ does not envy $a'$ because $u_a(P(g_r,g_m))\geq u_a(P(g_{l+2},g_{r-1}))$. Else, $a'$ chooses $P(g_r,g_m)$. In this case, $u_a(L)\geq u_a(P(g_{r+1},g_m))$ since $a$ is a shouter. It follows that $u_a(P(g_{l+2},g_{r-1}))>u_a(L)\geq u_a(P(g_{r+1},g_m))$, and therefore $a$ does not envy $a'$ up to good $g_r$.
    
    \item \emph{Case 3:} The allocation is returned in Step~4(a). First, we claim that if $a_i$ is a shouter who did not shout in the previous step (which was either Step~3 or Step~4), then $u_i(P(g_r,g_m)) >u_i(L)\geq u_i(P(g_{l+3},g_{r-1}))$. In the previous step, the bundle $M$ was $P(g_{l+3},g_{r-2})$ and the bundle $R$ was $P(g_r,g_m)$. Since $a_i$ did not shout with these bundles, it must be that either $u_i(P(g_{l+3},g_{r-2})) > u_i(L)$ or $u_i(P(g_r,g_m)) > u_i(L)$. Since $a_i$ is a shouter, $u_i(L)\geq u_i(P(g_{l+3},g_{r-1}))$, so the first case is impossible. This means that $u_i(P(g_r,g_m)) > u_i(L)\geq u_i(P(g_{l+3},g_{r-1}))$, as claimed.
    
    We now consider the three agents in turn. Since agent $a_{\text{left}}$ is a shouter, she does not envy the agent who receives $P(g_{l+2},g_{r-1})$ up to good $g_{l+2}$, and does not envy the agent who receives $P(g_r,g_m)$ up to good $g_r$. Agent $a'$ receives her preferred bundle between $P(g_{l+2},g_{r-1})$ and $P(g_r,g_m)$, so she does not envy the agent who receives the other bundle. If $a'$ is not a shouter, she prefers either $P(g_{l+2},g_{r-1})$ and $P(g_r,g_m)$ to $L$, and therefore prefers her chosen bundle to $L$. Else, $a'$ is a shouter, meaning that all three agents are shouters, and $a'$ was not a shouter in the previous step (since there was at most one shouter). By our claim, $a'$ prefers $P(g_r,g_m)$ to $L$. Since $a'$ has a choice between $P(g_{l+2},g_{r-1})$ and $P(g_r,g_m)$, she does not envy $a_{\text{left}}$.
    
    Finally, we show that agent $a$ does not envy the other two agents up to one good. We consider two cases.
    \begin{itemize}
        \item Agent $a'$ strictly prefers $P(g_r,g_m)$ to $P(g_{l+2},g_{r-1})$, and consequently chooses $P(g_r,g_m)$. So $a$ gets $P(g_{l+2},g_{r-1})$. By definition of lumpy tie, the lumpy tie of $a'$ over $P(g_{l+2},g_m)$ is either $g_r$ or to its right. Since we started with $g_r$ being the median lumpy tie over $P(g_{l+2},g_m)$ in Step~1 and incremented $r$ at least once in Step~4, $g_r$ is strictly to the right of the median lumpy tie over $P(g_{l+2},g_m)$. This means that $a'$ is a right agent over $P(g_{l+2},g_m)$. Since there can be at most one right agent, $a$ is either a left agent or a middle agent over $P(g_{l+2},g_m)$. It follows that $u_a(P(g_{l+2},g_{r-1}))\geq u_a(P(g_r,g_m))$, and so $a$ does not envy $a'$. Moreover, since $a$ did not shout in the previous step, our claim at the beginning of Case~3 implies that $u_a(P(g_r,g_m))> u_a(L)$. Hence $u_a(P(g_{l+2},g_{r-1})) > u_a(L)$, implying that $a$ does not envy $a_{\text{left}}$.
        \item Agent $a'$ weakly prefers $P(g_{l+2},g_{r-1})$ to $P(g_r,g_m)$, and therefore chooses $P(g_{l+2},g_{r-1})$. So $a$ gets $P(g_r,g_m)$. Since $a$ did not shout in the previous step, our claim at the beginning of Case~3 implies that $u_a(P(g_r,g_m)) >u_a(L)\geq u_a(P(g_{l+3},g_{r-1}))$. It follows that $a$ does not envy $a_{\text{left}}$, and also does not envy $a'$ up to good $g_{l+2}$.
    \end{itemize}
    
    \item \emph{Case 4:} The allocation is returned in Step~4(c). Since agent $a_{\text{left}}$ is a shouter, she does not envy the other two agents up to good $g_r$. The remaining two agents do not envy each other up to one good by definition of lumpy allocation. Any agent other than $a_{\text{left}}$ is not a shouter, and so prefers either $P(g_{l+3},g_{r-1})$ or $P(g_{r+1},g_m)$ to $L$. Since such an agent receives a bundle that she prefers to both $P(g_{l+3},g_{r-1})$ and $P(g_{r+1},g_m)$, she also prefers her bundle to $L$, meaning that she does not envy $a_{\text{left}}$ up to good $g_{l+2}$.
    
\end{itemize}

We now show that the algorithm can be implemented to use $O(\log^2 m)$ queries. The lumpy tie of each agent over a subsequence of goods can be found by binary search using $O(\log m)$ queries, and therefore the median lumpy tie can also be found using $O(\log m)$ queries. Since the lumpy allocation can be found based on the lumpy ties of the three agents and a constant number of additional queries \citep{BCFI+19}, computing it also takes $O(\log m)$ queries. Determining whether an agent shouts takes a constant number of queries. Based on these observations, we claim that for each value of $l$, the four steps of the algorithm can be done using $O(\log m)$ queries. The only step for which this is not clear is Step~4, since it involves repeatedly incrementing $r$ by 1. However, note that to implement Step~4, it suffices to determine for each agent $a_i$ the largest index $j$ such that $u_i(L)\geq u_i(P(g_{l+3},g_j))$ and the smallest index $k$ such that $u_i(L)\geq u_i(P(g_k,g_m))$. These indices can be found by binary search using $O(\log m)$ queries. Finally, again using binary search, it suffices to try $O(\log m)$ indices $l$. Hence the total number of queries is $O(\log^2 m)$.
\end{proof}

\end{document}